\documentclass[journal,onecolumn]{IEEEtran}
\newtheorem{theorem}{Theorem}
\newtheorem{lemma}{Lemma}
\newtheorem{corollary}{Corollary}

\ifCLASSINFOpdf
\else
\fi
\usepackage{amsmath}
\usepackage{amssymb}
\usepackage{bbm}
\begin{document}
%
\title{The Capacity of the Relay Channel}
%
%
%

\author{Jonathan Ponniah}
\maketitle

\begin{abstract}
The capacity of the relay channel is characterized by three schemes: decode-forward, coordinated compress-forward, and uncoordinated compress-forward.  The proof relies entirely on properties of typical sequences. 
\end{abstract}


%
\IEEEpeerreviewmaketitle

\section{Introduction}
\label{sec:introduction}
The capacity of the relay channel \cite{Van_Der_Meulen_1971} has long thought to be a problem that defies ordinary means of characterization.  We find the capacity of the relay channel using properties of typical sequences exclusively.  One property in particular, compactness, is especially useful.  The idea is that any arbitrary sequence of coding schemes lives in a compact space, so there will be a point in the space that the sequence of schemes returns to over and over again.  Even though the schemes in this sequence are totally arbitrary, we use this point to find regularity.  If the sequence of schemes is performing well in the limit, then it must be performing well at the point it keeps returning to.  By focusing on this point, we derive a ``compression'' cut-set bound on compress-forward schemes using a variation of the classical sphere-packing argument.  Our variation derives the codeword rate contributing significantly to the the mass of a compression.  We also find that the compress-forward space is is defined by three levers (or distributions): the reverse-compression channel, the input, and the compression prior.  These distributions are linked through the channel.  Out of this linkage emerges a zero-one circuit consistency law that determines whether or not a compress-forward scheme that purports to support coordination between the source and relay is actually feasible.

In Section \ref{sec:typicalsequences}, we present properties of typical sequences.  First, we look at sequences of conditionally-typical sets anchored to arbitrary sequences of codewords where the codewords are increasing in length.  The convergence properties we derive uniformly apply to all codeword sequences.  Second, we introduce the notion of sharp-typicality to get tight asymptotic bounds on the normalized exponent of the volumes of these sets.

In Section \ref{sec:pointtopointchannel}, we prove the converse to the capacity of the point-to-point channel.  Our approach is a sphere-packing argument.  We introduce the notion of significant types to justify assigning a single type to all codewords in a codebook.  Then we use the results in Section \ref{sec:typicalsequences} to find the minimum volume of a decoding set when the probability of decoding error is vanishing.  We get a bound on the codebook rate, but this bound is anchored to the sequence of codebook types.  We use the compactness of the simplex to find a subsequence of types that converge to a point on the simplex.

In Section \ref{sec:therelaychannel}, we characterize the capacity of the relay channel via three schemes: decode-forward, coordinated-compress-forward, and uncoordinated-compress-forward.  This requires some maneuvering around the ways the levers of the general compress-forward scheme interact with each other through the channel.  We identify a dichotomy in the compress-forward regime where only one degree-of-freedom exists between the input distribution and the reverse compression channel distribution.

In Section \ref{sec:proofoftheorem1} we give a proof of the capacity of the relay channel.  To start, we focus on codebooks with a block encoding/decoding structure.  We extend the notion of significant types to the joint source-relay codebook and take a subsequence where the joint type converges to a point on the simplex.  Then we further focus on good codebooks that fill up the space of relay observations.  The way we define good codebooks relies on the notion of sharp typicality in Section \ref{sec:typicalsequences}.  We use a genie argument \cite{Kramer2001} to isolate the source$\rightarrow$(relay, destination) channel from the relay$\rightarrow$destination channel.  This sets the stage for the main argument - deriving the compression cut-set-bound.  Our approach is to bound the rate the source codewords contributing to the mass of a compression.  A similar mass argument appears in \cite{Wu2009}.  We use the compactness of the interval defined by the cardinality of the relay symbol space (i.e., $[0,|{\cal X}_{r}|]$) to hone in on a well-behaved compression space.  We use a sphere-packing argument to get the compression cut-set bound.  

Finally, we tidy-up the assumptions. We justify our focus on good block-encoding schemes by using the compactness of the interval $[0,|{\cal X}_{r}|]$ to identify a block-coding regime and a symbol-by-symbol regime.  We also introduce additional constraints to swap out the genie.  We show the dichotomy in the compress-forward regime is due to an implicit zero-one circuit-consistency law that requires alignment between the forward channel distribution and the reverse-compression-channel distribution.

On the achievability side we introduce the coordinated compress-forward scheme.  The innovation here is partially encoding a message in the compression.  In \cite{Cover1979}, three schemes are proposed: decode-forward, compress-forward, and partial-decode-forward.  Partial-decode-forward does not achieve the compression cut-set bound.  Uncoordinated compress-forward is a simplified version of the compress-forward scheme in \cite{Cover1979} with no Wyner-Ziv coding.  Section \ref{sec:conclusion} concludes the paper.

\section{Typical Sequences}
\label{sec:typicalsequences}
We will establish some definitions/notation.  We denote a finite-length source sequence as an ordered set of symbols from the symbol set ${\cal X}_{s}$:

\begin{align}
{\bf x}^{(n)}_{s}:=\{(i,x^{(i)}_{s}): x^{(i)}_{s}\in{\cal X}_{s}, 1\leq i\leq n\}.
\end{align}
For a symbol $x^{(i)}_{s}\in{\cal X}_{s}$ define: 
\begin{align}
\mathbbm{1}_{\{x^{(i)}_{s}=a\}}:=\begin{cases}
1 & x^{(i)}_{s}=a,\\
0 & \text{otherwise}.
\end{cases}
\end{align}
In the notation above $x^{(i)}_{s}$ is the $i^{th}$ symbol in the finite-length sequence ${\bf x}^{(n)}_{s}$.  Next, we define the count function which returns the number of times a specific symbol appears in a sequence.

\begin{align}
N(a|{\bf x}^{(n)}_{s})=\sum^{n}_{i=1}\mathbbm{1}_{\{x^{(i)}_{s}=a\}}.
\end{align}

We will also need to define the count function for a symbol pair $(a,b)\in{\cal X}_{s}\times{\cal Y}_{d}$ over the sequence pair $({\bf x}^{(n)}_{s},{\bf y}^{(n)}_{d})$.  We define the indicator function:

\begin{align}
\mathbbm{1}_{\{x^{(i)}_{s}=a,\hspace{0.5mm}y^{(i)}_{r}=b\}}:=\begin{cases}
1 & x^{(i)}_{s}=a\text{ and }y^{(i)}_{d}=b,\\
0 & \text{otherwise}.
\end{cases}
\end{align}
Now we define the associated count function:

\begin{align}
N((a,b)|({\bf x}^{(n)}_{s}, {\bf y}^{(n)}_{d}))=\sum^{n}_{i=1}\mathbbm{1}_{\{x^{(i)}_{s}=a,\hspace{0.5mm}y^{(i)}_{d}=b\}}.
\end{align}
Next, for a fixed $\epsilon_{X}>0$, we define strong conditionally typical sequences:

\begin{align}
    \label{def:strongtypicalityv1}
    T^{(n)}_{\epsilon_{X}}(Y_{d}|{\bf x}^{(n)}_{s}):=\left\{{\bf y}^{(n)}_{d}:\frac{1}{n}|N((a,b)|({\bf x}^{(n)}_{s},{\bf y}^{(n)}_{d}))-p(b|a)N(a|{\bf x}^{(n)}_{s})|<\epsilon_{X}\right\},\hspace{5mm}\forall (a,b)\in{\cal X}_{s}\times{\cal Y}_{d}.
\end{align}
where the conditional probability $p(b|a)$ is defined with respect to the channel $({\cal X}_{s}, p(y_{d}|x_{s}), {\cal Y}_{d})$.  It is convenient to factor out $N(a|{\bf x}^{(n)}_{s})$ from the expression in (\ref{def:strongtypicalityv1}).  This gives:
\begin{align}
    \label{def:strongtypicalityv2}
    T^{(n)}_{\epsilon_{X}}(Y_{d}|{\bf x}^{(n)}_{s}):=\left\{{\bf y}^{(n)}_{d}:\frac{N(a|{\bf x}^{(n)}_{s})}{n}\left|\frac{N((a,b)|({\bf x}^{(n)}_{s},{\bf y}^{(n)}_{d}))}{N(a|{\bf x}^{(n)}_{s})}-p(b|a)\right|<\epsilon_{X}\right\}.
\end{align}

The expression in (\ref{def:strongtypicalityv2}) reveals two factors that affect whether or not a sequence ${\bf y}^{(n)}_{d}$ is strong-conditionally typical.  The first is the frequency of the anchor (transmitted) symbol $a\in{\cal X}_{s}$ defined by the term: 
\begin{align}
\frac{N(a|{\bf x}^{(n)}_{s})}{n}.
\end{align}
The second is the relative frequency of the anchor $a$ and the generated (received) symbol $b\in{\cal Y}_{d}$ with respect to $a$, denoted by: 
\begin{align}
\frac{N((a,b)|({\bf x}^{(n)}_{s},{\bf y}^{(n)}_{d}))}{N(a|{\bf x}^{(n)}_{s})}.
\end{align}
The first is determined by the source sequence ${\bf x}^{(n)}_{s}$, whereas the second is determined (in a probabilistic sense) by the channel $p(y_{d}|x_{s})$.  Our next task is making this statement more precise.  Given the channel $p(y_{d}|x_{s})$ and a specific source codeword ${\bf x}^{(n)}_{s}$, we define the following conditional probability of a received sequence ${\bf y}^{(n)}_{d}$ with respect to a source codeword ${\bf x}^{(n)}_{s}$:

\begin{align}
\label{def:probabilitymassv1}
\mu^{(n)}_{Y_{d}|{\bf x}^{(n)}_{s}}({\bf y}^{(n)}_{d})&:=\prod^{n}_{i=1}p(y^{(i)}_{d}|x^{(i)}_{s}).
\end{align}
Let ${\cal Y}^{(n)}_{d}$ denote the set of all received sequences.  We can define the mass of a subset $A^{(n)}\subseteq{\cal Y}^{(n)}_{d}$ of these sequences using (\ref{def:probabilitymassv1}):
\begin{align}
\label{def:probabilitymassv2}
\mu^{(n)}_{Y_{d}|{\bf x}^{(n)}_{s}}(A^{(n)}_{r})&:=\sum_{{\bf y}^{(n)}_{d}\in A^{(n)}}\mu_{Y_{d}|{\bf x}^{(n)}_{s}}({\bf y}^{(n)}_{d}).
\end{align}
For any source codeword ${\bf x}^{(n)}_{s}$, fix a symbol $a\in{\cal X}_{s}$.  We define two cases.

{\bf Case 1:} \begin{align}
\label{def:case1}
\frac{N(a|{\bf x}^{(n)}_{s})}{n}<\frac{1}{\sqrt{n}},
\end{align}

{\bf Case 2:}\begin{align}
\label{def:case2}
\frac{N(a|{\bf x}^{(n)}_{s})}{n}\geq\frac{1}{\sqrt{n}}.
\end{align}
The trivial case (the rare symbol case) is case 1.  It is straightforward to see that if (\ref{def:case1}) is satisfied then:
\begin{align}
    \label{case1eq1}
    \frac{N(a|{\bf x}^{(n)}_{s})}{n}\left|\frac{N((a,b)|({\bf x}^{(n)}_{s},{\bf y}^{(n)}_{d}))}{N(a|{\bf x}^{(n)}_{s})}-p(b|a)\right|<\frac{1}{\sqrt{n}},
\end{align}
since the following is always true:
\begin{align}
\label{case1eq2}
\left|\frac{N((a,b)|({\bf x}^{(n)}_{s},{\bf y}^{(n)}_{d}))}{N(a|{\bf x}^{(n)}_{s})}-p(b|a)\right|\leq1.
\end{align}
Informally, when an anchor symbol $a\in{\cal X}_{s}$ appears rarely in ${\bf x}^{(n)}_{s}$, the symbol pairs $(y^{(i)}_{d},a)$ generated by $a$ over the channel $p(y_{d}|x_{s})$ are also rare.  These symbol pairs will not generate a useful signature to identify the codeword ${\bf x}^{(n)}_{s}$.  That's no matter, because if $a\in{\cal X}_{s}$ is rare, then some other anchor symbol $a^{\prime}\in{\cal X}_{s}$ is not rare its symbol pairs can generate a useful signature.  

To analyze Case 2, it will be helpful to introduce the following notation of a source sequence ${\bf x}^{(n)}_{s}$ sampled at particular symbols.  For a fixed $a\in{\cal X}_{s}$ and codeword ${\bf x}^{(n)}_{s}$ define:

\begin{align}
\label{def:asampledcodeword}
{\bf x}^{(n)}_{s}[a]=\{(i,x^{(i)}_{s}):x^{(i)}_{s}=a, i\in\{1,\ldots,n\}\}.
\end{align}
Let $|{\bf x}^{(n)}_{s}[a]|$ denote the number of elements in ${\bf x}^{(n)}_{s}[a]$. Observe that:
\begin{align}
    \label{sizeofasampledcodeword}
    |{\bf x}^{(n)}_{s}[a]|=N(a|{\bf x}^{(n)}_{s}).
\end{align}
We can also rebase the measure in (\ref{def:probabilitymassv1}) to sampled codewords.  We define the conditional probability of a destination sequence ${\bf y}^{(n)}_{d}$ with respect to a sampled codeword ${\bf x}^{(n)}_{s}[a]$ as follows:

\begin{align}
    \label{def:probabilitymassv3}
    \mu^{(n)}_{Y_{d}|{\bf x}^{(n)}_{s}[a]}({\bf y}^{(n)}_{d}):=\prod_{i\in\{j:x^{(j)}_{s}=a\}}p(y^{(i)}_{d}|x^{(i)}_{s}).
\end{align}
It will also be convenient to extend our definition of a conditional typical set to sampled source sequence ${\bf x}^{(n)}_{s}[a]$.  For a fixed $\epsilon_{X}>0$, define: 
\begin{align}
    \label{def:strongtypicalityv3}
    T^{(n)}_{\epsilon_{X}}(Y_{d}|{\bf x}^{(n)}_{s}[a]):=\left\{{\bf y}^{(n)}_{d}:\frac{N(a|{\bf x}^{(n)}_{s})}{n}\left|\frac{N((a,b)|({\bf x}^{(n)}_{s},{\bf y}^{(n)}_{d}))}{N(a|{\bf x}^{(n)}_{s})}-p(b|a)\right|<\epsilon_{X}\right\},\hspace{5mm}\forall b\in{\cal Y}_{d}.
\end{align}
The difference between (\ref{def:strongtypicalityv2}) and (\ref{def:strongtypicalityv3}) is that (\ref{def:strongtypicalityv2}) must hold for all $(a,b)\in{\cal X}_{s}\times{\cal Y}_{d}$ whereas (\ref{def:strongtypicalityv3}) must only hold for fixed $a\in{\cal X}_{s}$ and all $b\in{\cal Y}_{d}$.  We now introduce a theorem about the measure of conditionally-typical sets.  
\begin{lemma}
\label{lemma:typicalsets}
Fix $a\in{\cal X}_{s}$.  Given a sequence of codewords $\{{\bf x}^{(n)}_{s}:n\in\mathbb{N}\}$, let $\{{\bf x}^{(n)}_{s}[a]:n\in\mathbb{N}\}$ denote a corresponding sequence of ``$a$-sampled'' codewords as defined in (\ref{def:asampledcodeword}).  Suppose $\{{\bf x}^{(n)}_{s}[a]:n\in\mathbb{N}\}$ satisfies the following condition:
\begin{align}
    \label{cond:sizeofasampledcodeword}
    |{\bf x}^{(n)}_{s}[a]|\geq\sqrt{n}.
\end{align}
For every $\epsilon_{X}>0$, there exists a vanishing sequence $\{\delta^{(n)}_{\epsilon_{X},a}:n\in\mathbb{N}\}$ such that:
\begin{align}
    \label{def:measureoftypicalset}
    \mu_{Y_{d}|{\bf x}^{(n)}_{s}[a]}(T^{(n)}_{\epsilon_{X}}(Y_{d}|{\bf x}^{(n)}_{s}[a]))>1-\delta^{(n)}_{\epsilon_{X},a}.
\end{align}
\end{lemma}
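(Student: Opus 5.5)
The plan is a Chebyshev (weak law of large numbers) bound applied symbol by symbol, followed by a union bound over the finite output alphabet ${\cal Y}_{d}$. Throughout, write $m_{n}:=N(a|{\bf x}^{(n)}_{s})=|{\bf x}^{(n)}_{s}[a]|$. Under the measure $\mu_{Y_{d}|{\bf x}^{(n)}_{s}[a]}$ in (\ref{def:probabilitymassv3}), the symbols $y^{(i)}_{d}$ at the $m_{n}$ positions $i$ with $x^{(i)}_{s}=a$ are independent and identically distributed according to $p(\cdot|a)$. The remaining coordinates are irrelevant to membership in $T^{(n)}_{\epsilon_{X}}(Y_{d}|{\bf x}^{(n)}_{s}[a])$. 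We therefore treat the measure as the product measure on those $m_{n}$ coordinates, formally marginalizing the others.

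Fix $b\in{\cal Y}_{d}$. The count $N((a,b)|({\bf x}^{(n)}_{s},{\bf y}^{(n)}_{d}))$ is a sum of $m_{n}$ i.i.d.\ Bernoulli$(p(b|a))$ variables. Its mean is therefore $m_{n}p(b|a)$ and its variance is $m_{n}p(b|a)(1-p(b|a))\leq m_{n}/4$. The defining condition in (\ref{def:strongtypicalityv3}) for this $b$ is equivalent to
\begin{align*}
\left|N((a,b)|({\bf x}^{(n)}_{s},{\bf y}^{(n)}_{d}))-m_{n}p(b|a)\right|<n\epsilon_{X}.
\end{align*}
By Chebyshev's inequality, the probability that this fails is at most
\begin{align*}
\frac{m_{n}}{4n^{2}\epsilon^{2}_{X}}\leq\frac{1}{4n\epsilon^{2}_{X}},
\end{align*}
where we used $m_{n}\leq n$.

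A union bound over $b\in{\cal Y}_{d}$ then gives
\begin{align*}
\mu_{Y_{d}|{\bf x}^{(n)}_{s}[a]}\left(T^{(n)}_{\epsilon_{X}}(Y_{d}|{\bf x}^{(n)}_{s}[a])\right)\geq1-\frac{|{\cal Y}_{d}|}{4n\epsilon^{2}_{X}}.
\end{align*}
We set $\delta^{(n)}_{\epsilon_{X},a}:=|{\cal Y}_{d}|/(2n\epsilon^{2}_{X})$. Doubling the bound makes the inequality strict, and this sequence clearly vanishes. The bound depends only on $n$, $\epsilon_{X}$ and $|{\cal Y}_{d}|$. It is therefore uniform over all codeword sequences, which is the uniformity the paper needs later.

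The main point needing care is the role of hypothesis (\ref{cond:sizeofasampledcodeword}). The normalization by $n$ in (\ref{def:strongtypicalityv3}) makes the bound above hold even without it. If instead the intended statement uses the relative-frequency deviation normalized by $m_{n}$, Chebyshev gives a failure probability of order $1/(m_{n}\epsilon^{2}_{X})\leq1/(\sqrt{n}\epsilon^{2}_{X})$. That is still vanishing, and here $|{\bf x}^{(n)}_{s}[a]|\geq\sqrt{n}$ is exactly what guarantees it. I would present the proof in this second, stronger form, taking $\delta^{(n)}_{\epsilon_{X},a}=|{\cal Y}_{d}|/(2\sqrt{n}\epsilon^{2}_{X})$, so that the hypothesis is genuinely used and the result carries over to the sharper typicality notions introduced later. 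The only other technical wrinkle is justifying that the mass in (\ref{def:probabilitymassv3}) is a probability measure on the $a$-positions, which follows from each factor $p(\cdot|a)$ summing to one.
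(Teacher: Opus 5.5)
Your proof is correct. Your main argument takes a more direct route than the paper's.

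\textbf{The paper's route.} It drops the factor $N(a|{\bf x}^{(n)}_{s})/n$ and passes to the tighter relative-frequency set $\tilde{T}^{(n)}_{\epsilon_{X}}(Y_{d}|{\bf x}^{(n)}_{s}[a])$ of (\ref{def:provisionaltypicalset}). It invokes the weak law of large numbers for $|{\bf x}^{(n)}_{s}[a]|=\sqrt{n}$, and asserts without computation that the same $\delta^{(n)}_{\epsilon_{X},a}$ works whenever $|{\bf x}^{(n)}_{s}[a]|\geq\sqrt{n}$. It concludes via the inclusion $\tilde{T}^{(n)}_{\epsilon_{X}}(Y_{d}|{\bf x}^{(n)}_{s}[a])\subseteq T^{(n)}_{\epsilon_{X}}(Y_{d}|{\bf x}^{(n)}_{s}[a])$.

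\textbf{Your second variant.} This is the paper's route. Your Chebyshev bound $|{\cal Y}_{d}|/(4m_{n}\epsilon_{X}^{2})$ is decreasing in $m_{n}$, and that is exactly what justifies the paper's informal monotonicity step. To recover the lemma as stated from this variant you still need the inclusion above, which you did not write down; your first argument makes it moot.

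\textbf{Your first argument.} You apply Chebyshev directly to the condition $|N((a,b)|({\bf x}^{(n)}_{s},{\bf y}^{(n)}_{d}))-m_{n}p(b|a)|<n\epsilon_{X}$. This buys three things:
\begin{itemize}
\item it shows hypothesis (\ref{cond:sizeofasampledcodeword}) is superfluous;
\item it gives the faster, explicitly uniform rate $|{\cal Y}_{d}|/(2n\epsilon_{X}^{2})$;
\item it makes the Case 1/Case 2 split in the proof of Theorem \ref{theorem:measureoftypicalset} unnecessary.
\end{itemize}
The paper's route instead buys a stronger conclusion: the empirical conditional distribution on the $a$-positions is itself $\epsilon_{X}$-close to $p(\cdot|a)$. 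That is where $|{\bf x}^{(n)}_{s}[a]|\geq\sqrt{n}$ is genuinely needed.

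\textbf{Your closing preference is slightly backwards.} The sharp sets $T^{(n)}_{*}(Y_{d}|{\bf x}^{(n)}_{s})=T^{(n)}_{\epsilon_{n}}(Y_{d}|{\bf x}^{(n)}_{s})$ used later are $n$-normalized. Your first bound allows any vanishing $\epsilon_{n}$ with $n\epsilon_{n}^{2}\rightarrow\infty$, while the relative-frequency bound requires $\sqrt{n}\epsilon_{n}^{2}\rightarrow\infty$. So the first form, not the second, is the one that serves the later corollaries better.
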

{\bf Proof:}  To prove (\ref{def:measureoftypicalset}) we first make the definition of a conditional typical set in (\ref{def:strongtypicalityv3}) less forgiving by dropping the term: 
\begin{align}
\frac{N(a|{\bf x}^{(n)}_{s})}{n}.
\end{align}
We define this provisional tighter set of sequences as follows:
\begin{align}
    \label{def:provisionaltypicalset}
    \tilde{T}^{(n)}_{\epsilon_{X}}(Y_{d}|{\bf x}^{(n)}_{s}[a]):=\left\{{\bf y}^{(n)}_{d}:\left|\frac{N((a,b)|({\bf x}^{(n)}_{s},{\bf y}^{(n)}_{d}))}{N(a|{\bf x}^{(n)}_{s})}-p(b|a)\right|<\epsilon_{X}\right\},\hspace{5mm}\forall b\in{\cal Y}_{d}.
\end{align}
First, suppose for all $n$: 
\begin{align}
\label{assumption:sqrt}
|{\bf x}^{(n)}_{s}[a]|=\sqrt{n}    
\end{align}
Then under assumption (\ref{assumption:sqrt}), the weak law of large numbers implies there is an associated vanishing $\{\delta^{(n)}_{\epsilon_{X},a}:n\in\mathbb{N}\}$ such that:
\begin{align}
    \label{def:measureofprovisionaltypicalset}
    \mu_{Y_{d}|{\bf x}^{(n)}_{s}[a]}(\tilde{T}^{(n)}_{\epsilon}(Y_{d}|{\bf x}^{(n)}_{s}[a]))>1-\delta^{(n)}_{\epsilon_{X},a}.
\end{align}
We can relax assumption (\ref{assumption:sqrt}), because under the actual theorem assumption (\ref{cond:sizeofasampledcodeword}), we have:
\begin{align}
    \label{assumption:actualtheorem}
    \sqrt{n}\leq |{\bf x}^{(n)}_{s}[a]|\leq n.
\end{align}
The implication of (\ref{assumption:actualtheorem}) is that the size of ${\bf x}^{(n)}_{s}[a]$ is bouncing around the interval $[\sqrt{n}, n]$ for every value of $n$.  But since these sizes are always larger than $\sqrt{n}$, the $\{\delta^{(n)}_{\epsilon_{X},a}:n\in\mathbb{N}\}$ in (\ref{def:measureofprovisionaltypicalset}) that works for (\ref{assumption:sqrt}) is also going to work for (\ref{assumption:actualtheorem}).  Finally, to conclude the proof, we note that:
\begin{align}
    \tilde{T}^{(n)}_{\epsilon}(Y_{d}|{\bf x}^{(n)}_{s}[a])\subseteq T^{(n)}_{\epsilon}(Y_{d}|{\bf x}^{(n)}_{s}[a]), 
\end{align}
since,
\begin{align}
    \frac{N(a|{\bf x}^{(n)}_{s})}{n} \leq 1.
\end{align}
\ensuremath{\blacksquare}

We would like to generalize Lemma \ref{lemma:typicalsets} in two ways.  First, we would like to remove the case 2 condition (\ref{cond:sizeofasampledcodeword}) and second, we would like to use measures and typical sets based on codewords ${\bf x}^{(n)}_{s}$ instead of ``$a$-sampled'' codewords ${\bf x}^{(n)}_{s}[a]$.  Before doing so, it is helpful to see that the definitions in (\ref{def:probabilitymassv1}) and (\ref{def:probabilitymassv3}) imply the following:
\begin{align}
\label{property:decomposemeasure}
\mu_{Y_{d}|{\bf x}^{(n)}_{s}}({\bf y}^{(n)}_{d}):=\prod_{a\in{\cal X}_{s}}\mu^{(n)}_{Y_{d}|{\bf x}^{(n)}_{s}[a]}({\bf y}^{(n)}_{d}).    
\end{align}

\begin{theorem}
    \label{theorem:measureoftypicalset}
    Given a sequence of codewords $\{{\bf x}^{(n)}_{s}:n\in\mathbb{N}\}$ and some fixed $0<\epsilon_{X}<1$, there exists some $N_{\epsilon_{X}}$ and a vanishing sequence $\{\delta^{(n)}_{\epsilon_{X}}:n\in\mathbb{N}\}$ such that for all $n\geq N_{\epsilon_{X}}$:
    \begin{align}
        \label{theorem:measureoftypicalset:eq5}
        \mu_{Y_{d}|{\bf x}^{(n)}_{s}}(T^{(n)}_{\epsilon_{X}}(Y_{d}|{\bf x}^{(n)}_{s}))>1-\delta^{(n)}_{\epsilon_{X}}.
    \end{align}
\end{theorem}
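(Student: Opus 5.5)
The plan is to reduce the statement to Lemma \ref{lemma:typicalsets} symbol by symbol, treating rare anchor symbols (Case 1) trivially, and then combine the symbols with a union bound. The first observation is that the full conditional typical set in (\ref{def:strongtypicalityv2}) is the intersection of the per-symbol sets:
\begin{align}
T^{(n)}_{\epsilon_{X}}(Y_{d}|{\bf x}^{(n)}_{s})=\bigcap_{a\in{\cal X}_{s}}T^{(n)}_{\epsilon_{X}}(Y_{d}|{\bf x}^{(n)}_{s}[a]).
\end{align}
This holds because (\ref{def:strongtypicalityv3}) imposes exactly the constraints of (\ref{def:strongtypicalityv2}) for a fixed anchor $a$. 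It is therefore enough to show that each complement $\left(T^{(n)}_{\epsilon_{X}}(Y_{d}|{\bf x}^{(n)}_{s}[a])\right)^{c}$ has vanishing $\mu_{Y_{d}|{\bf x}^{(n)}_{s}}$-mass, uniformly over the codeword sequence. Since $|{\cal X}_{s}|$ is finite, the union bound then gives $\delta^{(n)}_{\epsilon_{X}}:=\sum_{a}\delta^{(n)}_{\epsilon_{X},a}$.

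The membership condition for $T^{(n)}_{\epsilon_{X}}(Y_{d}|{\bf x}^{(n)}_{s}[a])$ depends on ${\bf y}^{(n)}_{d}$ only through the coordinates $i$ with $x^{(i)}_{s}=a$. By the product decomposition (\ref{property:decomposemeasure}), the $\mu_{Y_{d}|{\bf x}^{(n)}_{s}}$-mass of this set equals its mass under the marginal on those coordinates, which is the $a$-sampled measure $\mu^{(n)}_{Y_{d}|{\bf x}^{(n)}_{s}[a]}$. The other factors sum to one over their free coordinates, so they drop out. This identification has to be made carefully, since (\ref{def:probabilitymassv3}) is written as a function of the whole ${\bf y}^{(n)}_{d}$ rather than as a normalized measure on the sampled coordinates.

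Next, for each $n$ and each $a$, I split according to Case 1 and Case 2. In Case 1, (\ref{case1eq1}) shows that every ${\bf y}^{(n)}_{d}$ satisfies the constraint with bound $1/\sqrt{n}$. Choose $N_{\epsilon_{X}}$ so that $1/\sqrt{n}\le\epsilon_{X}$ for $n\ge N_{\epsilon_{X}}$; then the per-symbol typical set is all of ${\cal Y}^{(n)}_{d}$ and the complement has mass zero. This is where $N_{\epsilon_{X}}$ comes from. In Case 2, Lemma \ref{lemma:typicalsets} gives mass greater than $1-\delta^{(n)}_{\epsilon_{X},a}$.

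The main obstacle is that the Case 1/Case 2 label can switch with $n$ for a given codeword sequence, whereas Lemma \ref{lemma:typicalsets} is stated for sequences satisfying (\ref{cond:sizeofasampledcodeword}) at every $n$. To handle this, I would use the fact that the lemma's $\delta^{(n)}_{\epsilon_{X},a}$ can be chosen to depend only on the sample size. Concretely, I would use a Chebyshev bound on the conditional empirical frequencies: with $m=N(a|{\bf x}^{(n)}_{s})$, the failure probability is at most $|{\cal Y}_{d}|/(4m\epsilon_{X}^{2})\le|{\cal Y}_{d}|/(4\sqrt{n}\epsilon_{X}^{2})$. Then I define $\delta^{(n)}_{\epsilon_{X},a}:=|{\cal Y}_{d}|/(4\sqrt{n}\epsilon_{X}^{2})$ for every $n$. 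This bound is valid whenever Case 2 holds and trivially dominates the zero error in Case 1, so the subsequence issue disappears. Summing over $a$ gives a vanishing $\delta^{(n)}_{\epsilon_{X}}$ that is uniform over all codeword sequences, which proves (\ref{theorem:measureoftypicalset:eq5}) for $n\ge N_{\epsilon_{X}}$.
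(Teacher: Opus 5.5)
Your proof is correct and follows essentially the paper's route. Both split the typical set by anchor symbol, both dispose of Case 1 once $1/\sqrt{n}\le\epsilon_{X}$ (the paper's threshold is $n\geq(1/\epsilon_{X})^{2}$), both handle Case 2 by the law-of-large-numbers bound of Lemma~\ref{lemma:typicalsets}, and both combine over the finitely many symbols in ${\cal X}_{s}$. The differences are minor improvements: you combine with a union bound where the paper multiplies per-symbol masses via (\ref{property:decomposemeasure}) and uses $(1-c)^{|{\cal X}_{s}|}>1-|{\cal X}_{s}|c$, and your explicit Chebyshev rate $|{\cal Y}_{d}|/(4\sqrt{n}\epsilon_{X}^{2})$ makes rigorous both the uniformity over codeword sequences and the Case 1/Case 2 switching with $n$, which the paper's appeal to Lemma~\ref{lemma:typicalsets} leaves implicit.
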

{\bf Proof:}
In view of (\ref{property:decomposemeasure}), we first extend Lemma \ref{lemma:typicalsets} by removing the case 2 condition (\ref{cond:sizeofasampledcodeword}).  For any $\epsilon_{X}>0$ the following is true for all $n\geq (\frac{1}{\epsilon_{X}})^{2}$:
\begin{align}
    \label{property:largeenoughn}
    \frac{1}{\sqrt{n}}<\epsilon_{X}.
\end{align}
Now we make the observation that the definitions (\ref{def:probabilitymassv3}) and (\ref{def:strongtypicalityv3}) imply:
\begin{align}
    \label{theorem:measureoftypicalset:eq2}
    \mu_{Y_{d}|{\bf x}^{(n)}_{s}[a]}(T^{(n)}_{\epsilon_{X}}(Y_{d}|{\bf x}^{(n)}_{s}))&=\mu_{Y_{d}|{\bf x}^{(n)}_{s}[a]}(T^{(n)}_{\epsilon_{X}}(Y_{d}|{\bf x}^{(n)}_{s}[a])).
\end{align}
From Lemma \ref{lemma:typicalsets} and (\ref{property:largeenoughn}), it follows that for all $n\geq(\frac{1}{\epsilon_{X}})^{2}$, we have:
\begin{align}
    \label{theorem:measureoftypicalset:eq1}
    \mu_{Y_{d}|{\bf x}^{(n)}_{s}[a]}(T^{(n)}_{\epsilon_{X}}(Y_{d}|{\bf x}^{(n)}_{s}))>1-\delta^{(n)}_{\epsilon_{X},a}.
\end{align}
Next, we will define:
\begin{align}
    \label{theorem:measureoftypicalset:eq3}
    \delta^{(n)}_{\epsilon_{X}}=|{\cal X}_{s}|\max_{a\in{\cal X}_{s}}\delta^{(n)}_{\epsilon,a}.
\end{align}
It follows from (\ref{property:decomposemeasure}), (\ref{theorem:measureoftypicalset:eq2}) and (\ref{theorem:measureoftypicalset:eq1}) that for all $n\geq(\frac{1}{{\epsilon_{X}}})^{2}$:
\begin{align}
\label{theorem:measureoftypicalset:eq4}
\mu_{Y_{d}|{\bf x}^{(n)}_{s}}(T^{(n)}_{\epsilon_{X}}(Y_{d}|{\bf x}^{(n)}_{s}))>1-\delta^{(n)}_{\epsilon_{X}}.
\end{align}
The bound in (\ref{theorem:measureoftypicalset:eq4}) comes from (\ref{property:decomposemeasure}), (\ref{theorem:measureoftypicalset:eq1}), (\ref{theorem:measureoftypicalset:eq3}) and the inequality $(1-c)^{|{\cal X}_{s}|}>1-|{\cal X}_{s}|c$, where $c=\max_{a}\delta^{(n)}_{{\epsilon_{X}},a}$. \ensuremath{\blacksquare}

We define the type of a codeword as a point on the simplex over ${\cal X}_{s}$:
\begin{align}
{\bf p}^{(n)}_{s}:=\left\{\left(a, \frac{N(a|{\bf x}^{(n)}_{s})}{n}\right):a\in{\cal X}_{s}\right\}.
\end{align}
The salient point about Theorem \ref{theorem:measureoftypicalset} is that there are no assumptions on $\{{\bf x}^{(n)}_{s}:n\in\mathbb{N}\}$.  We do not assume that ${\bf p}^{(n)}_{s}$ converges to any point on the simplex.  The vanishing $\{\delta^{(n)}_{\epsilon}:n\in\mathbb{N}\}$ in  holds uniformly for all sequences $\{{\bf x}^{(n)}_{s}:n\in\mathbb{N}\}$ regardless of the corresponding types $\{{\bf p}^{(n)}_{s}:n\in\mathbb{N}\}$.  It will be convenient to replace the $\epsilon_{X}$ in $T^{(n)}_{\epsilon_{X}}(Y_{d}|{\bf x}^{(n)}_{s})$ with a  vanishing $\{\epsilon^{(n)}_{X}:n\in\mathbb{N}\}$.  We have the following corollary:
\begin{corollary}
\label{corollary:measureofsharptypicalsets}
Given a sequence of codewords $\{{\bf x}^{(n)}_{s}:n\in\mathbb{N}\}$, there exists some $N_{0}$, some vanishing $\{\epsilon_{n}:n\in\mathbb{N}\}$ and a vanishing $\{\delta^{(n)}_{X}:n\in\mathbb{N}\}$ such that for all $n\geq N_{0}$:
    \begin{align}
        \label{corollary:measureofsharptypicalsets:eq1}
        \mu_{Y_{d}|{\bf x}^{(n)}_{s}}(T^{(n)}_{\epsilon_{n}}(Y_{d}|{\bf x}^{(n)}_{s}))>1-\delta^{(n)}_{X}.
    \end{align}
\end{corollary}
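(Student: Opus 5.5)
The plan is a diagonal argument over a countable family of fixed tolerances, using Theorem \ref{theorem:measureoftypicalset} for each fixed tolerance and then letting the tolerance shrink slowly with $n$. Fix a decreasing sequence $\epsilon^{(1)}>\epsilon^{(2)}>\cdots$ in $(0,1)$ with $\epsilon^{(k)}\to 0$, for instance $\epsilon^{(k)}=1/(k+1)$. For each $k$, Theorem \ref{theorem:measureoftypicalset} applied to the given codeword sequence $\{{\bf x}^{(n)}_{s}:n\in\mathbb{N}\}$ with $\epsilon_{X}=\epsilon^{(k)}$ supplies a threshold $N_{\epsilon^{(k)}}$ and a vanishing sequence $\{\delta^{(n)}_{\epsilon^{(k)}}:n\in\mathbb{N}\}$. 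These satisfy
\begin{align}
\mu_{Y_{d}|{\bf x}^{(n)}_{s}}(T^{(n)}_{\epsilon^{(k)}}(Y_{d}|{\bf x}^{(n)}_{s}))>1-\delta^{(n)}_{\epsilon^{(k)}}\quad\text{for all } n\geq N_{\epsilon^{(k)}}.
\end{align}

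Next I choose increasing thresholds $M_{1}<M_{2}<\cdots$ with $M_{k}\geq N_{\epsilon^{(k)}}$ and $\delta^{(n)}_{\epsilon^{(k)}}<1/k$ for all $n\geq M_{k}$. This is possible because each $\delta^{(n)}_{\epsilon^{(k)}}$ vanishes in $n$. Set $N_{0}=M_{1}$. For $M_{k}\leq n<M_{k+1}$, define $\epsilon_{n}=\epsilon^{(k)}$ and $\delta^{(n)}_{X}=\delta^{(n)}_{\epsilon^{(k)}}$; for $n<N_{0}$, set both to $1$. Since $M_{k}\to\infty$, the index $k(n)$ tends to infinity. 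Hence $\epsilon_{n}=\epsilon^{(k(n))}\to0$ and $\delta^{(n)}_{X}<1/k(n)\to0$, so both sequences vanish. For every $n\geq N_{0}$ we have $n\geq M_{k(n)}\geq N_{\epsilon^{(k(n))}}$, so the displayed bound with $k=k(n)$ gives exactly (\ref{corollary:measureofsharptypicalsets:eq1}).

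The only delicate point is that the guarantee at index $n$ must come from a single fixed-tolerance application of Theorem \ref{theorem:measureoftypicalset}, valid at that $n$. The construction ensures this: on the block $[M_{k},M_{k+1})$ only the $k$-th application is used, and that block lies entirely past $N_{\epsilon^{(k)}}$. No monotonicity of $T^{(n)}_{\epsilon}$ in $\epsilon$ is needed. If a cleaner rate is wanted, one could instead track the explicit dependence of the weak-law bound in Lemma \ref{lemma:typicalsets} on $\epsilon_{X}$, which is of Chebyshev type $O(1/(\epsilon_{X}^{2}\sqrt{n}))$. One would then take $\epsilon_{n}=n^{-1/8}$ directly and also check $1/\sqrt{n}<\epsilon_{n}$ for the rare-symbol case. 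The diagonal argument avoids this bookkeeping, so it is the route I would take.
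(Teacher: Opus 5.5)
Your proposal is correct and follows the same route as the paper. The paper's proof only asserts that $\epsilon_n$ can be made to vanish slowly enough that $\delta^{(n)}_{\epsilon_n}$ still vanishes and $n\geq(1/\epsilon_n)^2$, i.e.\ $n$ stays past the threshold $N_{\epsilon_X}$ of Theorem~\ref{theorem:measureoftypicalset}; your block construction with $M_k\geq N_{\epsilon^{(k)}}$ and $\delta^{(n)}_{\epsilon^{(k)}}<1/k$ on $[M_k,M_{k+1})$ is exactly the diagonal argument that makes this assertion rigorous, and your piecewise-constant choice of $\epsilon_n$ over countably many fixed tolerances avoids any implicit regularity of $\delta^{(n)}_{\epsilon}$ in $\epsilon$.
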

{\bf Proof:}  From Theorem \ref{theorem:measureoftypicalset}, every fixed $\epsilon$ has a corresponding sequence $\{\delta^{(n)}_{\epsilon}:n\in\mathbb{N}\}$ that satisfies (\ref{theorem:measureoftypicalset:eq5}).  It follows that we can find some vanishing $\{\epsilon_{n}:n\in\mathbb{N}\}$ such that $\{\delta^{(n)}_{\epsilon_{n}}:n\in\mathbb{N}\}$ is also vanishing and $\delta^{(n)}_{X}=\delta^{(n)}_{\epsilon_{n}}$ satisfies (\ref{corollary:measureofsharptypicalsets:eq1}).  We just need to make sure that this $\{\epsilon_{n}:n\in\mathbb{N}\}$ is vanishing slowly enough to ensure $n\geq (\frac{1}{\epsilon_{n}})^{2}$. \ensuremath{\blacksquare}

The distinction between a vanishing epsilon $\{\epsilon_{n}:n\in\mathbb{N}\}$ and a fixed epsilon $\epsilon$ plays an important role in the argument for the capacity of the relay channel (and also the strong converse for the point-to-point channel).  It merits a special definition of typical sets - sharp typicality.  For the $\{\epsilon_{n}:n\in\mathbb{N}\}$ in Corollary \ref{corollary:measureofsharptypicalsets}, we define the set of sharply strong-typical ${\bf y}_{d}$ sequences as follows:
\begin{align}
    \label{def:sharplyconditionaltypicalsequences}
    T^{(n)}_{*}(Y_{d}|{\bf x}^{(n)}_{s}):=T^{(n)}_{\epsilon_{n}}(Y_{d}|{\bf x}^{(n)}_{s}).
\end{align}

We need the notion of sharp typicality to tighten the bounds on the normalized exponent on the volume of strong typical sets.  However, there is a subtle issue that must be addressed.  Our approach will use the fact that strongly typical sequences are also weakly typical, and weakly typical sequences have known bounds on their   normalized exponents.  The problem is that weak-conditionally typical sequences depend on the empirical symbol distribution ${\bf p}^{(n)}_{s}$ of the anchor sequence ${\bf x}^{(n)}_{s}$ and we have deliberately refrained from imposing conditions on this distribution.  In particular, we do not assume that ${\bf p}^{(n)}_{s}$ converges to anything.  Instead, we will label the conditional entropy induced by the type ${\bf p}^{(n)}_{s}$ and the channel probability transition matrix $p(y_{s}|x_{s})$ and define the volume in terms of that entropy.  Given a sequence of codewords $\{{\bf x}^{(n)}_{s}:n\in\mathbb{N}\}$ and corresponding types $\{{\bf p}^{(n)}_{s}:n\in\mathbb{N}\}$ define the distribution:
\begin{align}
    \label{def:typeinducedmarginal}
    p^{(n)}_{X}(x_{s}):={\bf p}^{(n)}_{s}.
\end{align}
We define the joint distribution induced by $p^{(n)}_{X}(x_{s})$ and $p(y_{d}|x_{s})$:
\begin{align}
    \label{def:typeinducedjoint}
    p^{(n)}(x_{s},y_{d}):=p(y_{d}|x_{s})p^{(n)}_{X}(x_{s}).
\end{align}
We have a second corollary of Theorem \ref{theorem:measureoftypicalset} that gives bounds on the normalized exponent of the volume of sharp conditionally-typical sets.
\begin{corollary}
    \label{corollary:volumeofsharplytypicalsets}
    Given a sequence of codewords $\{{\bf x}^{(n)}_{s}:n\in\mathbb{N}\}$ with corresponding types $\{{\bf p}^{(n)}_{s}:n\in\mathbb{N}\}$, let $H^{(n)}(Y_{d}|X_{s})$ denote the conditional entropy induced by the joint distribution $p^{(n)}(x_{s},y_{d})$ defined in (\ref{def:typeinducedjoint}).  There exists some vanishing $\{\alpha^{(n)}_{X}:n\in\mathbb{N}\}$ such that:
    \begin{align}
        \label{corollary:volumeofsharplytypicalsets:eq1}
        H^{(n)}(Y_{d}|X_{s})-\alpha^{(n)}_{X}<\frac{1}{n}\log_{2}|T^{(n)}_{*}(Y_{d}|{\bf x}^{(n)}_{s})|<H^{(n)}(Y_{d}|X_{s})+\alpha^{(n)}_{X}.
    \end{align}
\end{corollary}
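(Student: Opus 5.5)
We prove the two inequalities in (\ref{corollary:volumeofsharplytypicalsets:eq1}) separately. In both directions we compare the probability of each sharply typical sequence under $\mu_{Y_{d}|{\bf x}^{(n)}_{s}}$ with $2^{-nH^{(n)}(Y_{d}|X_{s})}$. Fix ${\bf y}^{(n)}_{d}\in T^{(n)}_{*}(Y_{d}|{\bf x}^{(n)}_{s})$. By (\ref{def:probabilitymassv1}),
\begin{align}
-\frac{1}{n}\log_{2}\mu_{Y_{d}|{\bf x}^{(n)}_{s}}({\bf y}^{(n)}_{d})=-\sum_{(a,b)}\frac{N((a,b)|({\bf x}^{(n)}_{s},{\bf y}^{(n)}_{d}))}{n}\log_{2}p(b|a).
\end{align}
Pairs with $p(b|a)=0$ have zero count whenever $\mu>0$, so we discard sequences of measure zero from the set. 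The definition (\ref{def:strongtypicalityv1}) with $\epsilon_{n}$ gives
\begin{align}
\left|\frac{N((a,b)|\cdot)}{n}-p(b|a)\frac{N(a|{\bf x}^{(n)}_{s})}{n}\right|<\epsilon_{n}.
\end{align}
Since $p(b|a)N(a|{\bf x}^{(n)}_{s})/n=p^{(n)}(a,b)$, the right-hand side of the display above differs from $H^{(n)}(Y_{d}|X_{s})$ by at most $\beta_{n}:=\epsilon_{n}\sum_{(a,b):p(b|a)>0}|\log_{2}p(b|a)|$. The quantity $\beta_{n}$ vanishes because it depends only on the fixed channel and not on the type ${\bf p}^{(n)}_{s}$. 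This is the key uniformity point: the bound requires no convergence of ${\bf p}^{(n)}_{s}$. It is exactly the place where sharp typicality (a vanishing $\epsilon_{n}$) is needed instead of a fixed $\epsilon$. Hence every typical sequence satisfies
\begin{align}
2^{-n(H^{(n)}(Y_{d}|X_{s})+\beta_{n})}\leq\mu_{Y_{d}|{\bf x}^{(n)}_{s}}({\bf y}^{(n)}_{d})\leq 2^{-n(H^{(n)}(Y_{d}|X_{s})-\beta_{n})}.
\end{align}

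For the upper bound on the volume, the total mass of the set is at most $1$. The lower pointwise bound therefore gives $|T^{(n)}_{*}|\leq 2^{n(H^{(n)}+\beta_{n})}$. For the lower bound on the volume, Corollary \ref{corollary:measureofsharptypicalsets} gives, for $n\geq N_{0}$,
\begin{align}
1-\delta^{(n)}_{X}<\mu(T^{(n)}_{*})\leq |T^{(n)}_{*}|\,2^{-n(H^{(n)}-\beta_{n})}.
\end{align}
So $\frac{1}{n}\log_{2}|T^{(n)}_{*}|>H^{(n)}-\beta_{n}+\frac{1}{n}\log_{2}(1-\delta^{(n)}_{X})$. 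We then set $\alpha^{(n)}_{X}:=2\beta_{n}-\frac{1}{n}\log_{2}(1-\delta^{(n)}_{X})$ for $n\geq N_{0}$; this vanishes and makes both inequalities strict. For the finitely many $n<N_{0}$ we take $\alpha^{(n)}_{X}$ large, for instance $\log_{2}|{\cal Y}_{d}|+1$. This is harmless for a vanishing sequence, and the bounds hold trivially there provided the set is nonempty.

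The main obstacle is keeping the bounds uniform over arbitrary codeword sequences. The constant $\beta_{n}$ must not depend on ${\bf p}^{(n)}_{s}$, which holds because we use the joint-count form (\ref{def:strongtypicalityv1}) rather than relative frequencies. A second subtlety is the zero-probability transitions: they contribute infinite $|\log p|$, so we restrict the sum to the support of the channel and note that sequences using forbidden pairs have measure zero. If such sequences cannot simply be removed from $T^{(n)}_{*}$ by definition, then we count them separately. Their number is at most $2^{n\cdot o(1)}$ times the rest, because their counts are below $n\epsilon_{n}$, so they do not change the exponent.
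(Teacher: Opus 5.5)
Your proposal is correct and follows essentially the paper's route. You bound the mass of each sharply typical sequence by $2^{-n(H^{(n)}(Y_{d}|X_{s})\pm K\epsilon_{n})}$ (your $\beta_{n}$ is the paper's $K\epsilon_{n}$, made explicit), then use total mass at most one for the upper volume bound and Corollary \ref{corollary:measureofsharptypicalsets} for the lower bound. Your version is in fact more careful than the paper's: your $\alpha^{(n)}_{X}$ has the correct sign on the $\frac{1}{n}\log_{2}(1-\delta^{(n)}_{X})$ term (the paper's definition adds this negative quantity), your constant correctly depends on the channel through $\log_{2}p(b|a)$ rather than only on alphabet sizes, and you handle the zero-probability transitions that the paper's pointwise lower bound silently ignores.
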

{\bf Proof:} For any ${\bf y}^{(n)}_{d}\in T^{(n)}_{*}(Y_{d}|{\bf x}^{(n)}_{s})$, there is some constant $K$ with respect to $n$ that depends only on $|{\cal X}_{s}|$ and $|{\cal Y}_{d}|$, and some vanishing $\{\epsilon_{n}:n\in\mathbb{N}\}$ such that:
\begin{align}
    \label{massofasharptypicalsequence}
    2^{-n(H^{(n)}(Y_{d}|X_{s})+K\epsilon_{n})}<\mu_{Y_{d}|{\bf x}^{(n)}_{s}}({\bf y}^{(n)}_{d})<2^{-n(H^{(n)}(Y_{d}|X_{s})-K\epsilon_{n})}.
\end{align}
The bounds in (\ref{massofasharptypicalsequence}) follow because strong typicality with respect to $\epsilon_{n}$ implies weak typicality with respect to $K\epsilon_{n}$, where $K$ is a constant with respect to $n$ that depends only on $|{\cal X}_{s}|$ and $|{\cal Y}_{d}|$.  Now Corollary \ref{corollary:measureofsharptypicalsets} implies that there is a vanishing $\{\delta^{(n)}_{X}:n\in\mathbb{N}\}$ such that:
\begin{align}
    \label{corollary:volumeofsharplytypicalsets:eq2}
    \mu_{Y_{d}|{\bf x}^{(n)}_{s}}(T^{(n)}_{*}(Y_{d}|{\bf x}^{(n)}_{s}))>1-\delta^{(n)}_{X}.
\end{align}
Define:
\begin{align}
    \label{corollary:measureofsharplytypicalsets:eq3}
    \alpha^{(n)}_{X}:=K\epsilon_{n}+\frac{1}{n}\log_{2}(1-\delta^{(n)}_{X}).
\end{align}
Then (\ref{corollary:volumeofsharplytypicalsets:eq1}) follows from (\ref{massofasharptypicalsequence}), (\ref{corollary:volumeofsharplytypicalsets:eq2}) and (\ref{corollary:measureofsharplytypicalsets:eq3}).
\ensuremath{\blacksquare}

The bounds on the normalized exponent of the volume in (\ref{corollary:volumeofsharplytypicalsets:eq1}) apply uniformly to any sequence $\{{\bf x}^{(n)}_{s}:n\in\mathbb{N}\}$ without restriction on the corresponding types $\{{\bf p}^{(n)}:n\in\mathbb{N}\}$.  We will use these bounds to develop sphere-packing arguments for the point-to-point discrete-memoryless channel and the relay channel.  

We have defined conditionally-typical sequences, now let us define marginally (or ordinary) typical ${\bf y}^{(n)}_{d}$-sequences in the ${\cal Y}^{(n)}_{d}$ space.  For a fixed $\epsilon_{Y}>0$ define:
\begin{align}
    \label{def:typicalreceivedsequences}
    T^{(n)}_{\epsilon_{Y}}(Y_{d}):=\left\{{\bf y}^{(n)}_{d}:\frac{N(a|{\bf y}^{(n)}_{d})}{n}<\epsilon_{Y}\right\},\hspace{5mm}\forall a\in{\cal Y}_{d}.
\end{align}

We continue to refer to the sequence of codewords $\{{\bf x}^{(n)}_{s}:n\in\mathbb{N}\}$ and its associated types $\{{\bf p}^{(n)}_{s}:n\in\mathbb{N}\}$.  Given the marginal $p^{(n)}_{X}(\cdot)$ in (\ref{def:typeinducedmarginal}), define:
\begin{align}
\label{marginalYinducedbytype}
p^{(n)}_{Y}(\cdot):=\sum_{x_{s\in{\cal X}_{s}}}p(\cdot|x_{s})p^{(n)}_{X}(x_{s})
\end{align}
For the $n$-length sequence ${\bf y}^{(n)}_{d}=\{(i,y^{(i)}_{d}):y^{(i)}_{d}\in{\cal Y}_{d},1\leq i\leq n\}$, define the measure:
\begin{align}
\label{measureofY}
\mu^{(n)}_{Y_{d}}({\bf y}^{(n)}_{d}):=\prod^{n}_{i=1}p^{(n)}_{Y}(y^{(i)}_{d}),
\end{align}
where the measure $\mu^{(n)}_{Y_{d}}(\cdot)$ is indexed to the marginal $p^{(n)}_{Y}(\cdot)$ induced by the types $\{{\bf p}^{(n)}_{s}:n\in\mathbb{N}\}$ and the channel distribution $p(y_{d}|x_{s})$ via (\ref{def:typeinducedmarginal}), (\ref{marginalYinducedbytype}) and (\ref{measureofY}).  We have the following lemma:
\begin{theorem}
\label{theorem:measureofamarginaltypicalset}
Given the sequence of codewords $\{{\bf x}^{(n)}_{s}:n\in\mathbb{N}\}$ with types $\{{\bf p}^{(n)}_{s}:n\in\mathbb{N}\}$ and any fixed $\epsilon_{Y}>0$, there exists a vanishing sequence $\{\delta^{(n)}_{\epsilon_{Y}}:n\in\mathbb{N}\}$ such that:
\begin{align}
    \label{theorem:measureofamarginaltypicalset:eq1}
    \mu^{(n)}_{Y_{d}}(T^{(n)}_{\epsilon_{Y}}(Y_{d}))>1-\delta^{(n)}_{\epsilon_{Y}},
\end{align}
where $\mu^{(n)}_{Y}(\cdot)$ is defined by (\ref{measureofY}).  The vanishing sequence $\{\delta^{(n)}_{\epsilon_{Y}}:n\in\mathbb{N}\}$ applies uniformly to all possible sequences of codewords $\{{\bf x}^{(n)}_{s}:n\in\mathbb{N}\}$.
\end{theorem}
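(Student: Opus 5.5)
I will read the definition (\ref{def:typicalreceivedsequences}) as the intended strong-typicality condition $\left|\frac{N(a|{\bf y}^{(n)}_{d})}{n}-p^{(n)}_{Y}(a)\right|<\epsilon_{Y}$ for all $a\in{\cal Y}_{d}$, with $p^{(n)}_{Y}$ the type-induced marginal (\ref{marginalYinducedbytype}). As literally written, with only $\frac{N(a|{\bf y}^{(n)}_{d})}{n}<\epsilon_{Y}$, the set is empty once $\epsilon_{Y}<1/|{\cal Y}_{d}|$, so the statement cannot be meant that way. The plan is a direct law-of-large-numbers argument with an explicit, distribution-free rate. Unlike the proof of Lemma \ref{lemma:typicalsets}, no sampling of the codeword is needed here, because the measure (\ref{measureofY}) is a product measure with identical marginals.

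The first step is to note that under $\mu^{(n)}_{Y_{d}}$ the symbols $y^{(1)}_{d},\ldots,y^{(n)}_{d}$ are i.i.d. with law $p^{(n)}_{Y}$. Hence, for each fixed $a\in{\cal Y}_{d}$, the count $N(a|{\bf y}^{(n)}_{d})$ is Binomial$(n,p^{(n)}_{Y}(a))$. Its normalized version has mean $p^{(n)}_{Y}(a)$ and variance $p^{(n)}_{Y}(a)(1-p^{(n)}_{Y}(a))/n\leq\frac{1}{4n}$. By Chebyshev's inequality,
\begin{align}
\mu^{(n)}_{Y_{d}}\left(\left\{{\bf y}^{(n)}_{d}:\left|\frac{N(a|{\bf y}^{(n)}_{d})}{n}-p^{(n)}_{Y}(a)\right|\geq\epsilon_{Y}\right\}\right)\leq\frac{1}{4n\epsilon_{Y}^{2}}.
\end{align}
The second step is a union bound over the finitely many $a\in{\cal Y}_{d}$. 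This gives (\ref{theorem:measureofamarginaltypicalset:eq1}) with
\begin{align}
\delta^{(n)}_{\epsilon_{Y}}:=\frac{|{\cal Y}_{d}|}{4n\epsilon_{Y}^{2}},
\end{align}
which vanishes in $n$. If an exponential rate is preferred, Hoeffding's inequality gives $2|{\cal Y}_{d}|2^{-2n\epsilon_{Y}^{2}\log_{2}e}$ in the same way.

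The only point requiring care, and the one the statement emphasizes, is uniformity over all codeword sequences $\{{\bf x}^{(n)}_{s}:n\in\mathbb{N}\}$. The types ${\bf p}^{(n)}_{s}$, and hence $p^{(n)}_{Y}$, need not converge, so the measure itself changes with $n$ arbitrarily. I would handle this by checking that the bound above depends on the codewords only through $p^{(n)}_{Y}(a)(1-p^{(n)}_{Y}(a))$, which is bounded by $\frac{1}{4}$ regardless of the type. The resulting $\delta^{(n)}_{\epsilon_{Y}}$ depends only on $n$, $\epsilon_{Y}$ and $|{\cal Y}_{d}|$. This mirrors the uniformity remark after Theorem \ref{theorem:measureoftypicalset}, and no subsequence or compactness argument is needed at this stage.
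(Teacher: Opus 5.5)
Your proof is correct and is essentially the paper's argument (weak law of large numbers plus finiteness of ${\cal Y}_{d}$), made quantitative through Chebyshev and a union bound. Your reading of (\ref{def:typicalreceivedsequences}) as $\left|\frac{N(a|{\bf y}^{(n)}_{d})}{n}-p^{(n)}_{Y}(a)\right|<\epsilon_{Y}$ is the right one, since it is also the reading under which the later inclusion (\ref{typicalreceivedsequencesinthebigset}) holds, and your bound $p^{(n)}_{Y}(a)(1-p^{(n)}_{Y}(a))\leq\frac{1}{4}$ is exactly what delivers the claimed uniformity over codeword sequences, which the paper's one-line proof leaves implicit (for the strict inequality in (\ref{theorem:measureofamarginaltypicalset:eq1}), just double your $\delta^{(n)}_{\epsilon_{Y}}$).
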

{\bf Proof:} The proof follows from the weak law of large numbers and the fact that ${\cal Y}_{d}$ is discrete and finite.
\begin{corollary}
    \label{corollary:measureofmarginalYtypicalset}
    Given a sequence of codewords $\{{\bf x}^{(n)}_{s}:n\in\mathbb{N}\}$ with corresponding types $\{{\bf p}^{(n)}_{s}:n\in\mathbb{N}\}$ and the induced distribution (\ref{measureofY}), there exists some vanishing $\{\epsilon_{n}:n\in\mathbb{N}\}$ and a vanishing $\{\delta^{(n)}_{Y}:n\in\mathbb{N}\}$ such that:
    \begin{align}
        \label{corollary:measureofmarginalYtypicalset:eq1}
        \mu^{(n)}_{Y_{d}}(T^{(n)}_{\epsilon_{n}}(Y_{d}))>1-\delta^{(n)}_{Y}.
    \end{align}
\end{corollary}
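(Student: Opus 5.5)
The plan is to mirror the proof of Corollary \ref{corollary:measureofsharptypicalsets}: a diagonal argument applied to the family of vanishing sequences supplied by Theorem \ref{theorem:measureofamarginaltypicalset}, one sequence for each fixed $\epsilon_{Y}$. The uniformity clause of that theorem is what makes this work. For each fixed $\epsilon_{Y}$, the sequence $\{\delta^{(n)}_{\epsilon_{Y}}:n\in\mathbb{N}\}$ depends only on $\epsilon_{Y}$ (and on $|{\cal Y}_{d}|$), not on the codewords $\{{\bf x}^{(n)}_{s}:n\in\mathbb{N}\}$ or their types $\{{\bf p}^{(n)}_{s}:n\in\mathbb{N}\}$. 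So the $n$-dependent marginal $p^{(n)}_{Y}(\cdot)$ in (\ref{measureofY}) causes no trouble.

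\textbf{Diagonal construction.} Take $\epsilon_{Y}=1/k$ for $k\in\mathbb{N}$. Since each $\{\delta^{(n)}_{1/k}:n\in\mathbb{N}\}$ vanishes, I can choose a strictly increasing sequence of integers $N_{1}<N_{2}<\cdots$ such that $\delta^{(n)}_{1/k}<1/k$ for all $n\geq N_{k}$. Then I define $\epsilon_{n}$ piecewise:
\begin{align}
\epsilon_{n}:=\frac{1}{k}\ \text{ for } N_{k}\leq n<N_{k+1},\qquad \epsilon_{n}:=1\ \text{ for } n<N_{1},
\end{align}
and I set $\delta^{(n)}_{Y}:=\delta^{(n)}_{\epsilon_{n}}$, taking $\delta^{(n)}_{Y}:=1$ for $n<N_{1}$, where (\ref{corollary:measureofmarginalYtypicalset:eq1}) holds trivially. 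Both sequences vanish: $\epsilon_{n}\to 0$ because $N_{k}$ is finite for every $k$, and $\delta^{(n)}_{Y}<1/k$ whenever $n\geq N_{k}$. For each $n\geq N_{1}$, the value $\epsilon_{n}$ is a fixed constant $1/k$, so Theorem \ref{theorem:measureofamarginaltypicalset} applies at that $n$ and gives $\mu^{(n)}_{Y_{d}}(T^{(n)}_{\epsilon_{n}}(Y_{d}))>1-\delta^{(n)}_{\epsilon_{n}}=1-\delta^{(n)}_{Y}$, which is the claim.

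\textbf{Explicit alternative.} One could instead make the weak-law step inside Theorem \ref{theorem:measureofamarginaltypicalset} explicit. I read the typical-set condition as the intended deviation from $p^{(n)}_{Y}(a)$. Chebyshev's inequality applied to the i.i.d.\ indicators $\mathbbm{1}_{\{y^{(i)}_{d}=a\}}$ under $\mu^{(n)}_{Y_{d}}$, followed by a union bound over $a\in{\cal Y}_{d}$, gives a failure probability of at most $|{\cal Y}_{d}|/(4n\epsilon^{2})$ for every type. Choosing $\epsilon_{n}=n^{-1/4}$ then yields $\delta^{(n)}_{Y}=|{\cal Y}_{d}|/(4\sqrt{n})$ directly, with no diagonalization.

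\textbf{Main obstacle.} The delicate point is the interchange of limits: $\epsilon_{n}$ must shrink slowly enough that the convergence guaranteed for each fixed $\epsilon_{Y}$ has already taken effect. The diagonal choice of the thresholds $N_{k}$, or the explicit Chebyshev rate, handles exactly this. Uniformity over the types is essential, since otherwise the thresholds $N_{k}$ could depend on the codeword sequence.
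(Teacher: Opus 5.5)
Your proof is correct and follows the paper's route. The paper only asserts that, since each fixed $\epsilon$ comes with a vanishing $\{\delta^{(n)}_{\epsilon}\}$ from Theorem \ref{theorem:measureofamarginaltypicalset}, one can pick a slowly vanishing $\{\epsilon_{n}\}$ with $\delta^{(n)}_{\epsilon_{n}}\to 0$; your thresholds $N_{k}$ are the explicit diagonalization behind that sentence. Your Chebyshev variant, read with the intended condition $|N(a|{\bf y}^{(n)}_{d})/n-p^{(n)}_{Y}(a)|<\epsilon$, is a valid explicit shortcut.

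One small correction to your closing remark: uniformity over codeword sequences is not needed for the corollary as quantified, because the sequence is given first. Uniformity only makes $\{\epsilon_{n}\}$ independent of that sequence.
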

{\bf Proof:} From Theorem \ref{theorem:measureofamarginaltypicalset}, every fixed $\epsilon$ has a corresponding sequence $\{\delta^{(n)}_{\epsilon}:n\in\mathbb{N}\}$ that satisfies (\ref{theorem:measureofamarginaltypicalset:eq1}).  It follows that we can find some vanishing $\{\epsilon_{n}:n\in\mathbb{N}\}$ such that $\{\delta^{(n)}_{\epsilon_{n}}:n\in\mathbb{N}\}$ is also vanishing and $\delta^{(n)}_{Y}:=\delta^{(n)}_{\epsilon_{n}}$ satisfies (\ref{corollary:measureofmarginalYtypicalset:eq1}).
\ensuremath{\blacksquare}

For the $\{\epsilon_{n}:n\in\mathbb{N}\}$ in Corollary \ref{corollary:measureofmarginalYtypicalset}, we define the set of sharply strong-typical ${\bf y}_{d}$-sequences as follows:
\begin{align}
    T^{(n)}_{*}(Y_{d}):=T^{(n)}_{\epsilon_{n}}(Y_{d}).
\end{align}
We have another corollary of Theorem \ref{theorem:measureofamarginaltypicalset} that gives bounds on the normalized exponent of the volume of sharp typical sets.
\begin{corollary}
    \label{corollary:volumeofmarginalYinducedbytype}
    Given a sequence of codewords $\{{\bf x}^{(n)}_{s}:n\in\mathbb{N}\}$ with corresponding types $\{{\bf p}^{(n)}_{s}:n\in\mathbb{N}\}$, let $H^{(n)}(Y_{d})$ denote the entropy induced by the marginal distribution $p^{(n)}_{Y}(\cdot)$ defined in (\ref{marginalYinducedbytype}).  There exists some vanishing $\{\alpha^{(n)}_{Y}:n\in\mathbb{N}\}$ such that:
    \begin{align}
        \label{corollary:volumeofmarginalYinducedbytype:eq4}
        H^{(n)}(Y_{d})-\alpha^{(n)}_{Y}<\frac{1}{n}\log_{2}|T^{(n)}_{*}(Y_{d})|<H^{(n)}(Y_{d})+\alpha^{(n)}_{Y}.
    \end{align}
\end{corollary}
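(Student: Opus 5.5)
The plan is to mirror the proof of Corollary \ref{corollary:volumeofsharplytypicalsets}, with the conditional measure replaced by the product measure $\mu^{(n)}_{Y_{d}}$ from (\ref{measureofY}) and Corollary \ref{corollary:measureofsharptypicalsets} replaced by Corollary \ref{corollary:measureofmarginalYtypicalset}.

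The definition in (\ref{def:typicalreceivedsequences}) is evidently meant to read $\left|\frac{N(a|{\bf y}^{(n)}_{d})}{n}-p^{(n)}_{Y}(a)\right|<\epsilon_{Y}$ for all $a\in{\cal Y}_{d}$. Otherwise the set is empty for small $\epsilon_{Y}$, and Theorem \ref{theorem:measureofamarginaltypicalset} would fail. I work with that reading.

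\textbf{Step 1 (per-sequence mass).} Fix ${\bf y}^{(n)}_{d}\in T^{(n)}_{*}(Y_{d})$. Then
\begin{align}
-\frac{1}{n}\log_{2}\mu^{(n)}_{Y_{d}}({\bf y}^{(n)}_{d})=-\sum_{a\in{\cal Y}_{d}}\frac{N(a|{\bf y}^{(n)}_{d})}{n}\log_{2}p^{(n)}_{Y}(a).
\end{align}
This differs from $H^{(n)}(Y_{d})$ by at most $\epsilon_{n}\sum_{a}|\log_{2}p^{(n)}_{Y}(a)|$.

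This step is the main obstacle, because the bound is not uniform when $p^{(n)}_{Y}(a)$ is tiny or zero. The types ${\bf p}^{(n)}_{s}$ are arbitrary, so $p^{(n)}_{Y}$ may approach the boundary of the simplex.

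\emph{Symbols with $p^{(n)}_{Y}(a)=0$.} Such symbols never occur in a sequence of positive mass. Intersecting $T^{(n)}_{*}(Y_{d})$ with the support removes them without changing the measure. Alternatively, one can observe that they contribute nothing to either sum.

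\emph{Symbols with $0<p^{(n)}_{Y}(a)<\eta_{n}$.} Here I split symbols as in the Case 1/Case 2 dichotomy, with a threshold $\eta_{n}$ chosen so that $\epsilon_{n}\log_{2}(1/\eta_{n})\to0$. Such a symbol contributes to both $H^{(n)}(Y_{d})$ and the empirical sum at most about $(\eta_{n}+\epsilon_{n})\log_{2}(1/\eta_{n})$, up to an $O(\eta_{n}\log(1/\eta_{n}))$ term. Taking $\eta_{n}=2^{-1/\sqrt{\epsilon_{n}}}$, this contribution is $O(\sqrt{\epsilon_{n}})$.

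\emph{Symbols with $p^{(n)}_{Y}(a)\geq\eta_{n}$.} The error here is at most $|{\cal Y}_{d}|\epsilon_{n}\log_{2}(1/\eta_{n})=|{\cal Y}_{d}|\sqrt{\epsilon_{n}}$.

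Combining the three cases gives a vanishing $\beta_{n}$, uniform over all type sequences, with
\begin{align}
2^{-n(H^{(n)}(Y_{d})+\beta_{n})}\leq\mu^{(n)}_{Y_{d}}({\bf y}^{(n)}_{d})\leq2^{-n(H^{(n)}(Y_{d})-\beta_{n})}.
\end{align}
If the rare symbols cause trouble in the lower bound, I would fall back on the counting bound $|T^{(n)}_{*}(Y_{d})|\leq 2^{n(H(\hat p)+o(1))}$ via the method of types. That bound uses uniform continuity of entropy on the finite simplex, since empirical types in $T^{(n)}_{*}$ lie within $\epsilon_{n}$ of $p^{(n)}_{Y}$ in each coordinate.

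\textbf{Step 2 (total mass).} Corollary \ref{corollary:measureofmarginalYtypicalset} gives
\begin{align}
1-\delta^{(n)}_{Y}<\mu^{(n)}_{Y_{d}}(T^{(n)}_{*}(Y_{d}))\leq1.
\end{align}

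\textbf{Step 3 (combine).} For the upper bound, sum the per-sequence lower bound on mass over the set, which gives $|T^{(n)}_{*}(Y_{d})|\leq2^{n(H^{(n)}+\beta_{n})}$. For the lower bound, use the per-sequence upper bound on mass together with total mass at least $1-\delta^{(n)}_{Y}$, which gives $|T^{(n)}_{*}(Y_{d})|\geq(1-\delta^{(n)}_{Y})2^{n(H^{(n)}-\beta_{n})}$.

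Set $\alpha^{(n)}_{Y}:=\beta_{n}-\frac{1}{n}\log_{2}(1-\delta^{(n)}_{Y})$, plus an arbitrarily small vanishing slack to make the inequalities strict. This yields (\ref{corollary:volumeofmarginalYinducedbytype:eq4}).
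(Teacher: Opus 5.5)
Your skeleton is the same as the paper's: a per-sequence mass sandwich from strong-implies-weak typicality, the total-mass bound of Corollary~\ref{corollary:measureofmarginalYtypicalset}, and two-sided counting. Your corrected reading of (\ref{def:typicalreceivedsequences}) is the intended one. You are also right that the paper's justification, a constant $K$ depending only on $|{\cal Y}_{d}|$, is not uniform over arbitrary types.

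Your threshold argument does repair one direction. Write $\hat{p}(a)=N(a|{\bf y}^{(n)}_{d})/n$. Every term $\hat{p}(a)\log_{2}(1/p^{(n)}_{Y}(a))$ is nonnegative, so discarding the symbols with $p^{(n)}_{Y}(a)<\eta_{n}$ gives $\mu^{(n)}_{Y_{d}}({\bf y}^{(n)}_{d})\leq 2^{-n(H^{(n)}(Y_{d})-\beta_{n})}$ uniformly (trivially for zero-mass sequences). With Step 2, this yields the lower bound on $|T^{(n)}_{*}(Y_{d})|$. Your sign in $\alpha^{(n)}_{Y}$ is the correct one; the paper's $+\frac{1}{n}\log_{2}(1-\delta^{(n)}_{Y})$ is a slip.

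The other direction fails as written. For $0<p^{(n)}_{Y}(a)<\eta_{n}$ the empirical term is $\hat{p}(a)\log_{2}(1/p^{(n)}_{Y}(a))$. There $\log_{2}(1/p^{(n)}_{Y}(a))$ \emph{exceeds} $\log_{2}(1/\eta_{n})$, so the bound $(\eta_{n}+\epsilon_{n})\log_{2}(1/\eta_{n})$ does not follow.

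Concretely, take a noiseless binary channel and codewords containing a single $1$, so $p^{(n)}_{Y}(1)=1/n$. A sequence with about $n\epsilon_{n}/2$ ones lies in $T^{(n)}_{*}(Y_{d})$. Yet $-\frac{1}{n}\log_{2}\mu^{(n)}_{Y_{d}}({\bf y}^{(n)}_{d})-H^{(n)}(Y_{d})\approx\frac{\epsilon_{n}}{2}\log_{2}n$, which need not vanish for the slowly vanishing $\epsilon_{n}$ of Corollary~\ref{corollary:measureofmarginalYtypicalset}. The same example refutes the paper's constant-$K$ claim.

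Likewise, intersecting with the support is not allowed in this direction, since it changes the cardinality you are bounding. So the per-sequence lower bound on mass in Step 1 is not established.

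Your fallback is therefore required, not optional. Bound $|T^{(n)}_{*}(Y_{d})|\leq(n+1)^{|{\cal Y}_{d}|}\max_{\hat{p}}2^{nH(\hat{p})}$ over types $\hat{p}$ within $\epsilon_{n}$ of $p^{(n)}_{Y}$ in each coordinate, and use uniform continuity of entropy on the simplex. That bound is uniform in the type sequence and also absorbs sequences containing zero-probability symbols. With the upper bound from counting and the lower bound from your mass argument, the proof is complete and more rigorous than the paper's.
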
 
{\bf Proof:} For any ${\bf y}^{(n)}_{d}\in T^{(n)}_{*}(Y_{d})$, there is some constant $K$ with respect to $n$ that depends only on $|{\cal Y}_{d}|$, and some vanishing $\{\epsilon_{n}:n\in\mathbb{N}\}$ such that:
\begin{align}
    \label{corollary:volumeofmarginalYinducedbytype:eq1}
    2^{-n(H^{(n)}(Y_{d})+k\epsilon_{n})}<\mu^{(n)}_{{Y}_{d}}({\bf y}^{(n)}_{d})<2^{-n(H^{(n)}(Y_{d})-k\epsilon_{n})}.
\end{align}
The bounds in (\ref{corollary:volumeofmarginalYinducedbytype:eq1}) follow because strong typicality with respect to $\epsilon_{n}$ implies weak typicality with respect to $K\epsilon_{n}$, where $K$ is a constant with respect to $n$ that depends only on $|{\cal Y}_{d}|$.  Now Corollary \ref{corollary:measureofmarginalYtypicalset} implies that there is a vanishing $\{\delta^{(n)}_{Y}:n\in\mathbb{N}\}$ such that:
\begin{align}
    \label{corollary:volumeofmarginalYinducedbytype:eq2}
    \mu^{(n)}_{Y_{d}}(T^{(n)}_{*}(Y_{d}))>1-\delta^{(n)}_{Y}.
\end{align}
Define:
\begin{align}
    \label{corollary:volumeofmarginalYinducedbytype:eq3}
    \alpha^{(n)}_{Y}:=K\epsilon_{n}+\frac{1}{n}\log_{2}(1-\delta^{(n)}_{Y}).
\end{align}
Then (\ref{corollary:volumeofmarginalYinducedbytype:eq4}) follows from (\ref{corollary:volumeofmarginalYinducedbytype:eq1}), (\ref{corollary:volumeofmarginalYinducedbytype:eq2}), and (\ref{corollary:volumeofmarginalYinducedbytype:eq3}).

\section{The Point-to-Point Discrete-Memoryless Channel}
\label{sec:pointtopointchannel}
The point-to-point channel is defined by the triple $({\cal X}_{s},p(y_{d}|x_{s}),{\cal Y}_{d})$, where ${\cal X}_{s}$ denotes the set of symbols sent by the source and ${\cal Y}_{d}$ denotes the set of symbols received by the destination.  The channel statistics are defined by the probability transition matrix $p(y_{d}|x_{s})$, where $y_{d}\in{\cal Y}_{d}$ and $x_{s}\in{\cal X}_{s}$.  Associated with the channel are sequences of message sets, source encoding functions, and destination decoding functions all indexed by $n$.  Let $M^{(n)}_{s}:=\{1,\ldots, |M^{(n)}_{s}|\}$ denote the source message set.  Let $f^{(n)}_{s}: M^{(n)}_{s}\rightarrow{\bf x}^{(n)}_{s}$ denote the source encoding function, where ${\bf x}^{(n)}_{s}:=\{x^{(i)}_{s}\in{\cal X}_{s}:1\leq i\leq n\}$.  Let $g^{(n)}_{d}:{\bf y}^{(n)}_{d}\rightarrow\hat{M}^{(n)}_{s}$ denote the destination decoding function, where ${\bf y}^{(n)}_{d}:=\{y^{(i)}_{d}\in{\cal Y}_{d}:1\leq i\leq n\}$ and $\hat{M}^{(n)}_{d}:=\{1,\ldots, |M^{(n)}_{s}|\}$.  Let ${\cal Y}^{(n)}_{d}$ denote the set of $n$-length ${\bf y}_{d}$ sequences.

The estimated message set at the destination $\hat{M}^{(n)}_{d}$ has the same size as the source message set $M^{(n)}_{s}$.  The destination decoding function $g^{(n)}_{d}$ creates a decoding set $A^{(n)}_{m}\in{\cal Y}^{(n)}_{d}$ associated with every message $m\in\hat{M}^{(n)}_{d}$.  We will assume the decoding sets are disjoint.  If message $m\in M^{(n)}_{s}$ is selected by the source, the source encoding function $f^{(n)}_{s}$ sends the sequence ${\bf x}^{(n)}_{s}(m)$.  This sequence passes through the channel and generates the random received sequence ${\bf Y}^{(n)}_{d}$ at the destination.  The probability of the destination correctly decoding the message is:
\begin{align}
\label{pointtopoint:decodingsetprob}
\mu_{Y_{d}|{\bf x}^{(n)}_{s}(m)}(A^{(n)}_{m}).
\end{align}

The measure $\mu_{Y_{d}|{\bf x}^{(n)}_{s}(m)}(\cdot)$ in (\ref{pointtopoint:decodingsetprob}) is defined in (\ref{def:probabilitymassv1}).  The capacity $C$ of the point-to-point channel is the supremum of all rates $R\doteq|M^{(n)}_{s}|$, with a corresponding sequence of encoding/decoding functions $\{f^{(n)}_{s}, g^{(n)}_{d}:n\in\mathbb{N}\}$ that satisfies:
\begin{align}
    \label{pointtopoint:vanishingdecodingerror}
    \lim_{n\rightarrow\infty}\min_{m\in M^{(n)}_{s}}\mu_{Y_{d}|{\bf x}^{(n)}_{s}(m)}(A^{(n)}_{m})=1.
\end{align}
For the record, $R\doteq|M^{(n)}_{s}|$ means $R=\lim_{n\rightarrow\infty}\frac{1}{n}\log_{2}|M^{(n)}_{s}|$.  The expression in (\ref{pointtopoint:vanishingdecodingerror}) implies that the worst case probability of decoding error vanishes.
\begin{theorem}
    \label{theorem:capacityofthepointtopointchannel}
    The capacity of the point-to-point discrete-memoryless channel satisfies:
    \begin{align}
    \label{theorem:capacityofthepointtopointchannel:eq1}
        C\leq\max_{p(x_{s})}I(X_{s};Y_{d}).
    \end{align}
\end{theorem}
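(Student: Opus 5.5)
We prove the converse by sphere-packing, following the outline in the Introduction. Fix a rate $R$ and a sequence of encoding/decoding functions $\{f^{(n)}_{s},g^{(n)}_{d}:n\in\mathbb{N}\}$ satisfying (\ref{pointtopoint:vanishingdecodingerror}). The plan is to show $R\le I(X_{s};Y_{d})$ for some input distribution $p(x_{s})$. The bound will be computed on a large subcodebook of constant type, and then compactness of the simplex will be used to pass to a limiting distribution.

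\textbf{Step 1: significant types.} The number of types of length-$n$ sequences over ${\cal X}_{s}$ is at most $(n+1)^{|{\cal X}_{s}|}$, which is polynomial in $n$. By pigeonhole, some type ${\bf p}^{(n)}_{s}$ is shared by a subcodebook $\tilde{M}^{(n)}_{s}\subseteq M^{(n)}_{s}$ with $|\tilde{M}^{(n)}_{s}|\ge |M^{(n)}_{s}|/(n+1)^{|{\cal X}_{s}|}$. Then $\frac{1}{n}\log_{2}|\tilde{M}^{(n)}_{s}|\to R$. Because the error criterion in (\ref{pointtopoint:vanishingdecodingerror}) is worst-case, every message in the subcodebook still has vanishing error.

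\textbf{Step 2: minimum volume of a decoding set.} For $m\in\tilde{M}^{(n)}_{s}$, Corollary \ref{corollary:measureofsharptypicalsets} gives $\mu_{Y_{d}|{\bf x}^{(n)}_{s}(m)}(T^{(n)}_{*}(Y_{d}|{\bf x}^{(n)}_{s}(m)))>1-\delta^{(n)}_{X}$, and by assumption $\mu(A^{(n)}_{m})\to 1$. Hence $B^{(n)}_{m}:=A^{(n)}_{m}\cap T^{(n)}_{*}(Y_{d}|{\bf x}^{(n)}_{s}(m))$ has mass at least $1-\delta^{(n)}_{X}-\eta_{n}$, where $\eta_{n}$ is the worst-case error. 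The upper bound in (\ref{massofasharptypicalsequence}) says each sequence in $B^{(n)}_{m}$ has mass below $2^{-n(H^{(n)}(Y_{d}|X_{s})-K\epsilon_{n})}$. Therefore
\begin{align}
\tfrac{1}{n}\log_{2}|B^{(n)}_{m}|\ge H^{(n)}(Y_{d}|X_{s})-\beta_{n}
\end{align}
for some vanishing $\beta_{n}$. Here $H^{(n)}$ is the conditional entropy induced by the common type, so this bound is the same for every message in the subcodebook.

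\textbf{Step 3: packing.} We need the sets $B^{(n)}_{m}$ to lie in a set of size about $2^{nH^{(n)}(Y_{d})}$. Each ${\bf y}\in B^{(n)}_{m}$ is conditionally typical with an anchor of type ${\bf p}^{(n)}_{s}$, so its empirical distribution is within $|{\cal X}_{s}|\epsilon_{n}$ of $p^{(n)}_{Y}$. The number of sequences with such a type is at most $2^{n(H^{(n)}(Y_{d})+\gamma_{n})}$, by the standard type-class count or by Corollary \ref{corollary:volumeofmarginalYinducedbytype} applied to the set of ${\bf y}$-sequences typical with respect to $p^{(n)}_{Y}$. The decoding sets are disjoint, so the $B^{(n)}_{m}$ are disjoint subsets of this set. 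This gives
\begin{align}
\tfrac{1}{n}\log_{2}|\tilde{M}^{(n)}_{s}|\le H^{(n)}(Y_{d})-H^{(n)}(Y_{d}|X_{s})+\beta_{n}+\gamma_{n}=I^{(n)}(X_{s};Y_{d})+o(1).
\end{align}

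\textbf{Step 4: compactness.} The types ${\bf p}^{(n)}_{s}$ lie in the compact simplex, so a subsequence converges to some $p^{*}(x_{s})$. Mutual information is continuous in the input distribution for a fixed channel. Taking limits along this subsequence gives $R\le I(X_{s};Y_{d})|_{p^{*}}\le\max_{p(x_{s})}I(X_{s};Y_{d})$.

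The main obstacle is Step 3. The marginal sharp-typical set in Corollary \ref{corollary:volumeofmarginalYinducedbytype} is defined with its own vanishing $\epsilon_{n}$, and we must check that conditional sharp-typicality implies membership in a marginal typical set whose volume is still $2^{n(H^{(n)}(Y_{d})+o(1))}$. I would try to handle this directly: bound the size of the union of type classes within $O(\epsilon_{n})$ of $p^{(n)}_{Y}$, using uniform continuity of entropy on the simplex. That bound holds uniformly in the type and does not require matching the two $\epsilon_{n}$ sequences.
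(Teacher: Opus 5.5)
Your proof is correct and follows essentially the same route as the paper. You restrict to a constant-type subcodebook: your pigeonhole on the largest type class plays the role of the paper's significant types, and the uniformity of Theorem \ref{theorem:measureoftypicalset} that you use implicitly is what the paper makes explicit with its permutation argument. You then lower-bound $|A^{(n)}_{m}\cap T^{(n)}_{*}(Y_{d}|{\bf x}^{(n)}_{s}(m))|$ by $2^{n(H^{(n)}(Y_{d}|X_{s})-o(1))}$, pack these disjoint sets into a marginal typical set of size $2^{n(H^{(n)}(Y_{d})+o(1))}$, and pass to a convergent subsequence of types. Your direct type-class count in Step 3 is a slightly more careful substitute for the paper's inclusion $T^{(n)}_{*}(Y_{d}|{\bf x}^{(n)}_{s}(m))\subseteq T^{(n)}_{*}(Y_{d})$ combined with Corollary \ref{corollary:volumeofmarginalYinducedbytype}, and it works as you describe.
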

{\bf Proof:}  The source encoding function induces a source codebook of codewords ${\cal C}^{(n)}_{s}(R):=\{{\bf x}^{(n)}_{s}(m):m\in M^{(n)}_{s}\}$.  Each codeword ${\bf x}^{(n)}_{s}(m)$ has a corresponding type ${\bf p}^{(n)}_{s}(m)$.  We would like to restrict our attention to significant types, that is, types which have enough codewords to deserve consideration.  First, we define a type counting function.  Let $N({\cal C}^{(n)}_{s}(R)|{\bf p}^{(n)}_{s})$ denote the number of codewords in ${\cal C}^{(n)}_{s}(R)$ that have the type ${\bf p}^{(n)}_{s}$.  Next, we name a vanishing quantity implicit in the problem formulation.  Since $R\doteq|M^{(n)}_{s}|$, we know there is some vanishing $\{\epsilon^{(n)}_{1}:n\in\mathbb{N}\}$ such that:
\begin{align}
    \label{boundsonsizeofcodebook}
    R-\epsilon^{(n)}_{1}<\frac{1}{n}\log_{2}|M^{(n)}_{s}|<R+\epsilon^{(n)}_{1}.
\end{align}
Finally, we say that the type ${\bf p}^{(n)}_{s}$ is significant if it satisfies the following condition:
\begin{align}
    \label{definitionofsignificanttypes}
    \frac{1}{n}\log_{2}|N({\cal C}^{(n)}_{s}(R)|{\bf p}^{(n)}_{s})|>R-\epsilon^{(n)}_{1}-\frac{(\log_{2}n)(\log_{2}n)}{n}.
\end{align}
A type is insignificant if it is not significant.  Given a sequence of codebooks $\{{\cal C}^{(n)}(R):n\in\mathbb{N}\}$ induced by the sequence of source encoding functions $\{f^{(n)}_{s}:n\in\mathbb{N}\}$, let $\tilde{{\cal C}}^{(n)}_{s}(R)$ denote the set of codewords from ${\cal C}^{(n)}_{s}(R)$ with insignificant types.  Since all codewords have the same mass, the mass of $\tilde{{\cal C}}^{(n)}_{s}(R)$ is exactly:
\begin{align}
    \label{insignificanttypeprobability}
    \frac{|\tilde{{\cal C}}^{(n)}_{s}(R)|}{|{\cal C}^{(n)}_{s}(R)|}.
\end{align}
Let ${\cal X}^{(n)}_{s}$ denote the set of all possible $n$-length ${\bf x}_{s}$-sequences.  There are at most $n^{K_{T}}$ distinct types in ${\cal X}^{(n)}_{s}$ for some constant $K_{T}$ with respect to $n$.  So we can use (\ref{definitionofsignificanttypes}) to get the following bound on the number of insignificant types in ${\cal C}^{(n)}_{s}(R)$:
\begin{align}
    \label{upperboundonnumberofinsignificanttypes}
    \frac{1}{n}\log_{2}|\tilde{C}^{(n)}_{s}(R)|<R-\epsilon^{(n)}_{1}-\frac{(\log_{2}n)(\log_{2}n)}{n}+\frac{K_{T}\log_{2}n}{n}.
\end{align}
From the definition of ${\cal C}^{(n)}_{s}(R)$ we know that:
\begin{align}
    \label{sizeofcodebookissameassizeofmessageset}
    |C^{(n)}_{s}(R)|=|M^{(n)}_{s}|.
\end{align}
It follows from (\ref{boundsonsizeofcodebook}), (\ref{upperboundonnumberofinsignificanttypes}), and (\ref{sizeofcodebookissameassizeofmessageset}) that:
\begin{align}
    \label{boundonmassofinsignificanttypes}
    \log_{2}\left(\frac{|\tilde{{\cal C}}^{(n)}_{s}(R)|}{|{\cal C}^{(n)}_{s}(R)|}\right)<K_{T}\log_{2}n -(\log_{2}n)(\log_{2}n).
\end{align}
It follows from (\ref{boundonmassofinsignificanttypes}) that:
\begin{align}
    \label{asymptoticmassofinsignificanttypes}
    \lim_{n\rightarrow\infty}\frac{|\tilde{{\cal C}}^{(n)}_{s}(R)|}{|{\cal C}^{(n)}_{s}(R)|}=0.
\end{align}
Since the mass of the insignificant types is vanishing, as per (\ref{asymptoticmassofinsignificanttypes}), we can restrict our attention to significant types.  In fact, we can go even further and restrict our attention to one specific significant type in each codebook - a special subcodebook consisting of codewords sharing exactly one significant type.  If the original sequence of codebooks satisfies (\ref{pointtopoint:vanishingdecodingerror}) then so must the sequence of special subcodebooks, since each special subcodebook has fewer codewords than the original.  Hence, we will assume a codebook sequence $\{{\cal C}^{(n)}_{s}(R):n\in\mathbb{N}\}$ where all the codewords in a codebook share the same type.  We need to show that (\ref{pointtopoint:vanishingdecodingerror}) implies (\ref{theorem:capacityofthepointtopointchannel:eq1}).

We name more vanishing quantities in the formulation.  If (\ref{pointtopoint:vanishingdecodingerror}) holds, then there is a vanishing $\{\epsilon^{(n)}_{2}:n\in\mathbb{N}\}$ such that for every $n\in\mathbb{N}$:
\begin{align}
    \label{epsilononvanishingdecodingerror}
    \min_{m\in M^{(n)}_{s}}\mu_{Y_{d}|{\bf x}^{(n)}_{s}(m)}(A^{(n)}_{m})>1-\epsilon^{(n)}_{2}.
\end{align}
Since each codeword ${\bf x}^{(n)}_{s}\in{\cal C}^{(n)}_{s}(R)$ shares the same type ${\bf p}^{(n)}_{s}$ (as per our previous assumption), every codeword ${\bf x}^{(n)}_{s}(m)$ is a permutation $\Pi^{(n)}_{m}$ of some base codeword which we will choose to be ${\bf x}^{(n)}_{s}(1)$.  We have the following set of equations:
\begin{align}
    \nonumber
    \mu_{Y_{d}|{\bf x}^{(n)}_{s}(m)}(T^{(n)}_{*}(Y_{d}|{\bf x}^{(n)}_{s}(m)))&=\mu_{Y_{d}|\Pi^{(n)}_{m}({\bf x}^{(n)}_{s}(m))}(T^{(n)}_{*}(Y_{d}|\Pi^{(n)}_{m}({\bf x}^{(n)}_{s}(m))),\\
    \nonumber
    &=\mu_{Y_{d}|{\bf x}^{(n)}_{s}(1)}([\Pi^{(n)}_{m}]^{-1}(T^{(n)}_{*}(Y_{d}|\Pi^{(n)}_{m}({\bf x}^{(n)}_{s}(m)))),\\
    \nonumber
    &=\mu_{Y_{d}|{\bf x}^{(n)}_{s}(1)}([\Pi^{(n)}_{m}]^{-1}\Pi^{(n)}_{m}(T^{(n)}_{*}(Y_{d}|{\bf x}^{(n)}_{s}(1))),\\
    \label{permutationprobability}
    &=\mu_{Y_{d}|{\bf x}^{(n)}_{s}(1)}(T^{(n)}_{*}(Y_{d}|{\bf x}^{(n)}_{s}(1))).
\end{align}
From (\ref{permutationprobability}) and Corollary \ref{corollary:measureofsharptypicalsets}, there exists some vanishing sequence $\{\delta^{(n)}_{X}:n\in\mathbb{N}\}$ such that:
\begin{align}
    \label{minprobabilityofatypicalset}
    \min_{m\in M^{(n)}_{s}}\mu_{Y_{d}|{\bf x}^{(n)}_{s}(m)}(T^{(n)}_{*}(Y_{d}|{\bf x}^{(n)}_{s}(m)))>1-\delta^{(n)}_{X}
\end{align}
Now we link (\ref{epsilononvanishingdecodingerror}) with (\ref{minprobabilityofatypicalset}) to get:
\begin{align}
    \label{intersectionprobability}
    \min_{m\in M^{(n)}_{s}}\mu_{Y_{d}|{\bf x}^{(n)}_{s}(m)}(A^{(n)}_{m}\cap T^{(n)}_{*}(Y_{d}|{\bf x}^{(n)}_{s}(m)))>1-\epsilon^{(n)}_{2}-\delta^{(n)}_{X}
\end{align}
Moreover, for every ${\bf y}^{(n)}_{d}\in T^{(n)}_{*}(Y_{d}|{\bf x}^{(n)}_{s}(m))$ and every $m\in M^{(n)}_{s}$, there exists some vanishing $\{\epsilon^{(n)}_{3}:n\in\mathbb{N}\}$ such that:
\begin{align}
    \label{massofatypicalysequence}
    2^{-n(H^{(n)}(Y_{d}|X_{s})+\epsilon^{(n)}_{3})}<\mu_{Y_{d}|{\bf x}^{(n)}_{s}(m)}({\bf y}^{(n)}_{d})<2^{-n(H^{(n)}(Y_{d}|X_{s})-\epsilon^{(n)}_{3})},
\end{align}
where the conditional entropy $H^{(n)}(Y_{d}|X_{s})$ is defined with respect to the joint distribution $p^{(n)}(x_{s},y_{d})=p(y_{d}|x_{s})p^{(n)}_{X}(x_{s})$ and $p^{(n)}_{X}(x_{s})={\bf p}^{(n)}_{s}$.  To justify (\ref{massofatypicalysequence}), observe that a ${\bf y}^{(n)}_{d}$-sequence strongly typical with respect to $\epsilon_{n}$ is weakly typical with respect to $K\epsilon_{n}$ for some constant $K$ (with respect to $n$) that depends only on $|{\cal X}_{s}|$ and $|{\cal Y}_{d}|$.  We get (\ref{massofatypicalysequence}) by setting $\epsilon^{(n)}_{3}=K\epsilon^{(n)}$.  The bound (\ref{massofatypicalysequence}) applies uniformly to all ${\bf x}^{(n)}_{s}\in{\cal C}^{(n)}_{s}(R)$ (or equivalently, all $m\in M^{(n)}_{s}$) because we assume that all these codewords share the same type ${\bf p}^{(n)}_{s}$.  Define:
\begin{align}
    \label{epsilon4}
    \epsilon^{(n)}_{4}:=\frac{1}{n}\log_{2}(1-\epsilon^{(n)}_{2}-\delta^{(n)}_{X})
\end{align}
We use (\ref{intersectionprobability}) and (\ref{massofatypicalysequence}) to get a lower bound on $|A^{(n)}_{m}\cap T^{(n)}_{*}(Y_{d}|{\bf x}^{(n)}(m))|$ for all $m\in M^{(n)}_{s}$:
\begin{align}
    \label{minimumvolumeofdecodingset}
    \min_{m\in M^{(n)}_{s}}|A^{(n)}_{m}\cap T^{(n)}_{*}(Y_{d}|{\bf x}^{(n)}_{s}(m))|>2^{n(H^{(n)}(Y_{d}|X_{s})+\epsilon^{(n)}_{4}-\epsilon^{(n)}_{3})}.
\end{align}
Now comparing the definition of $\epsilon_{X}$-conditionally-typical sequences in (\ref{def:strongtypicalityv1}) and ordinary typical sequences in (\ref{def:typicalreceivedsequences}), gives:
\begin{align}
    \label{typicalreceivedsequencesinthebigset}
    T^{(n)}_{\epsilon_{X}}(Y_{d}|{\bf x}^{(n)}_{s}(m))\subseteq T^{(n)}_{|{\cal X}_{s}|\epsilon_{X}}(Y_{d}),
\end{align}
for all $m\in M^{(n)}_{s}$.  From (\ref{typicalreceivedsequencesinthebigset}), we can choose a vanishing $\{\epsilon^{(n)}_{Y}:n\in\mathbb{N}\}$ to define $T^{(n)}_{*}(Y_{d})$ such that:
\begin{align}
    T^{(n)}_{*}(Y_{d}|{\bf x}^{(n)}_{s}(m))\subseteq T^{(n)}_{*}(Y_{d}),
\end{align}
for all $m\in M^{(n)}_{s}$.  Now Corollary \ref{corollary:volumeofmarginalYinducedbytype} implies there is some vanishing $\{\alpha^{(n)}_{Y}:n\in\mathbb{N}\}$ such that:
\begin{align}
    \label{sizeoftypicalysequences}
    |T^{(n)}_{*}(Y_{d})|<2^{n(H^{(n)}(Y_{d})+\alpha^{(n)}_{Y})},
\end{align}
where the entropy $H^{(n)}(Y_{d})$ is defined with respect to the distribution $p^{(n)}(y_{d})=\sum_{x_{s}\in{\cal X}_{s}}p^{(n)}(y_{d}|x_{s})p^{(n)}_{X}(x_{s})$ and $p^{(n)}_{X}(x_{s}):={\bf p}^{(n)}_{s}$.  Putting together (\ref{boundsonsizeofcodebook}), (\ref{minimumvolumeofdecodingset}), and (\ref{sizeoftypicalysequences}) gives:
\begin{align}
    \label{outerbound}
    R<I^{(n)}(X_{s};Y_{d})+\epsilon^{(n)}_{1}+\alpha^{(n)}_{Y}+\epsilon^{(n)}_{3}-\epsilon^{(n)}_{4},
\end{align}
for all $n\in\mathbb{N}$ and vanishing $\{\epsilon^{(n)}_{1},\alpha^{(n)}_{Y},\epsilon^{(n)}_{3},\epsilon^{(n)}_{4}\}$.  To close out the argument, we address the problem that $p^{(n)}_{X}(x_{s}):={\bf P}^{(n)}_{s}$ isn't converging to anything.  Since the simplex over ${\cal X}_{s}$ is compact, there exists a point $p(x_{s})$ on the simplex and a subsequence $\{n_{k}:k\in\mathbb{N}\}$ such that:
\begin{align}
    \label{inputdistribution}
    \lim_{k\rightarrow\infty}p^{(n_{k})}(x_{s})=p(x_{s}).
\end{align}
Therefore,
\begin{align}
    \label{mutualinformation}
    \lim_{k\rightarrow\infty}\left(I^{(n_{k})}(X_{s};Y_{d})+\epsilon^{(n_{k})}_{1}+\alpha^{(n_{k})}_{Y}+\epsilon^{(n_{k})}_{3}-\epsilon^{(n_{k})}_{4}\right)=I(X_{s};Y_{d}),
\end{align}
where $I(X_{s};Y_{d})$ is defined with respect to the joint distribution $p(x_{s},y_{d})=p(y_{d}|x_{s})p(x_{s})$ and $p(x_{s})$ comes from (\ref{inputdistribution}).  The proof follows from (\ref{outerbound}) and (\ref{mutualinformation}). \ensuremath{\blacksquare}

Two techniques from the proof of Theorem \ref{theorem:capacityofthepointtopointchannel} will be useful in finding the capacity of the relay channel.  First, is our focus on significant types, which allows us to assume that all codewords in a codebook share the same type.  Second, is the use of compactness to get a subsequence where the types converge to a point on the simplex.

\section{The Discrete-Memoryless Relay Channel}
\label{sec:therelaychannel}
The relay channel is defined by the triple $({\cal X}_{s}\times{\cal X}_{r},p(y_{r},y_{d}|x_{s},x_{r}),{\cal Y}_{r}\times{\cal Y}_{d})$, where ${\cal X}_{s}$ and ${\cal X}_{r}$ denote the set of symbols sent by the source and relay, and ${\cal Y}_{r}$ and ${\cal Y}_{d}$ denote the set of symbols received by the relay and destination.  The channel statistics are defined by the probability transition matrix $p(y_{r},y_{d}|x_{s},x_{r})$, where $y_{r}\in{\cal Y}_{r}$, $y_{d}\in{\cal Y}_{d}$, $x_{s}\in{\cal X}_{s}$, and $x_{r}\in{\cal X}_{r}$.

Associated with the channel are sequences of message sets, source and relay encoding functions, and destination decoding functions all indexed by $n$.  Let $M^{(n)}_{s}:=\{1,\ldots, |M^{(n)}_{s}|\}$ denote the source message set.  Let $f^{(n)}_{s}: M^{(n)}_{s}\rightarrow{\bf x}^{(n)}_{s}$ denote the source encoding function, where ${\bf x}^{(n)}_{s}:=\{x^{(i)}_{s}\in{\cal X}_{s}:1\leq i\leq n\}$.  Let $f^{(n)}_{r}:{\bf y}^{(n-1)}_{r}\rightarrow{\bf x}^{(n)}_{r}$ denote the relay encoding function, where ${\bf y}^{(n-1)}_{r}:=\{y^{(i)}_{r}\in{\cal Y}_{r}:1\leq i\leq n\}$ and ${\bf x}^{(n)}_{r}:=\{x^{(i)}_{r}\in{\cal X}_{r}:1\leq i\leq n\}$.  Let $g^{(n)}_{d}:{\bf y}^{(n)}_{d}\rightarrow\hat{M}^{(n)}_{s}$ denote the destination decoding function, where ${\bf y}^{(n)}_{d}:=\{y^{(i)}_{d}\in{\cal Y}_{d}:1\leq i\leq n\}$ and $\hat{M}^{(n)}_{d}:=\{1,\ldots, |M^{(n)}_{s}|\}$.  The estimated message set at the destination $\hat{M}^{(n)}_{d}$ has the same size as the source message set $M^{(n)}_{s}$.

The destination decoding function $g^{(n)}_{d}$ creates a decoding set $A^{(n)}_{m}\in{\cal Y}^{(n)}_{d}$ associated with every message $m\in\hat{M}^{(n)}_{d}$.  We will assume the decoding sets are disjoint.  If message $m\in M^{(n)}_{s}$ is selected by the source, the source encoding function $f^{(n)}_{r}$ sends the sequence ${\bf x}^{(n)}_{s}(m)$.  This sequence passes through the channel and generates the random received sequence ${\bf Y}^{(n)}_{r}$ at the relay, which (via $f^{(n)}_{r})$ generates the random relay transmitted sequence ${\bf X}^{(n)}_{r}$.  The source and relay sequences ${\bf x}^{(n)}_{s}$ and ${\bf X}^{(n)}_{r}$ interact in the channel to produce the random sequence ${\bf Y}^{(n)}_{d}$ received at the destination.  The probability of the destination correctly decoding the message is:
\begin{align}
\label{decodingsetprob}
\mu_{Y_{d}|{\bf x}^{(n)}_{s}(m)}(A^{(n)}_{m}).
\end{align}

The measure $\mu_{Y_{d}|{\bf x}^{(n)}(s)}(\cdot)$ in (\ref{decodingsetprob}) is induced by the probability transition matrix $p(y_{r},y_{d}|x_{s},x_{r})$ and the relay encoding function $f^{(n)}_{r}$ producing ${\bf X}^{(n)}_{r}$.  The capacity of the relay channel is the supremum of all rates $R\doteq|M^{(n)}_{s}|$, with a corresponding sequence of encoding/decoding functions $\{f^{(n)}_{s}, f^{(n)}_{r}, g^{(n)}_{d}:n\in\mathbb{N}\}$ that satisfies:
\begin{align}
    \label{vanishingdecodingerror}
    \lim_{n\rightarrow\infty}\min_{m\in M^{(n)}_{s}}\mu_{Y_{d}|{\bf x}^{(n)}_{s}(m)}(A^{(n)}_{m})=1.
\end{align}

We will introduce some preliminary definitions.  Let $\hat{\cal Y}_{r}$ denote an auxiliary symbol set at the relay, used for building relay compressions.  We can restrict the cardinality of $\hat{\cal Y}_{r}$ (so the search space isn't infinite):
\begin{align}
    |\hat{\cal Y}_{r}|=|{\cal X}_{r}|.
\end{align}

A compress-forward scheme is characterized by a set consisting of an input distribution $p_{\text{cf}}(x_{s},x_{r})$ over ${\cal X}_{s}\times{\cal X}_{r}$, a compression prior $p_{\text{cf}}(\hat{y}_{r}|x_{r})$ over $\hat{{\cal Y}}_{r}\times{\cal X}_{r}$, and a reverse-compression channel $p_{\text{cf}}(y_{r}|\hat{y}_{r},x_{r})$ over ${\cal Y}_{r}\times\hat{\cal Y}_{r}\times{\cal X}_{r}$.  
\begin{align}
    \label{compressforwardinputset}
    {\cal P}_{\text{cf}}:=\{p_{\text{cf}}(x_{s},x_{r}),p_{\text{cf}}(y_{r}|\hat{y}_{r},x_{r}),p_{\text{cf}}(\hat{y}_{r}|x_{r})\}.
\end{align}
The elements in ${\cal P}_{\text{cf}}$ can interact through the channel $p(y_{r},y_{d}|x_{s},x_{r})$ which partially explains why we don't have three degrees of freedom (the full explanation requires a codebook-level analysis of the scheme).  It is useful to define:
\begin{align}
    \label{constraintoncorrelatedset}
    p_{\text{for}}(y_{r}|\hat{y}_{r},x_{r})&:=\frac{\sum_{x_{s},y_{d}}p_{\text{cf}}(x_{s},x_{r})\times p_{\text{cf}}(\hat{y}_{r}|x_{r})\times p(y_{r},y_{d}|x_{s},x_{r})}{\sum_{x_{s},\hat{y}_{r},y_{d}}p_{\text{cf}}(x_{s},x_{r})\times p_{\text{cf}}(\hat{y}_{r}|x_{r})\times p(y_{r},y_{d}|x_{s},x_{r})}.
\end{align}
The reverse-compression-channel in (\ref{constraintoncorrelatedset}) is defined by the forward relay channel $p(y_{r},y_{d}|x_{s},x_{r})$ and the input $p_{\text{cf}}(x_{s},x_{r})$. The ``for'' subscript refers to this dependence on the forward relay channel.  We will also define a forward-compression-channel (which is not the same as the forward relay channel) in terms of the reverse-compression-channel.  Since there are two ways of defining the latter (i.e., either via $p_{\text{cf}}(y_{r}|\hat{y}_{r},x_{r})$ or $p_{\text{for}}(y_{r}|\hat{y}_{r},x_{r})$ in (\ref{constraintoncorrelatedset})), there are two ways of defining the former:   
\begin{align}
    \label{forwardcompressionchannelv1}
    p_{\text{for}}(\hat{y}_{r}|y_{r},x_{r})&:=\frac{\sum_{x_{s},x_{r}}p_{\text{for}}(y_{r}|\hat{y}_{r},x_{r})\times p_{\text{cf}}(\hat{y}_{r}|x_{r})\times p_{\text{cf}}(x_{s},x_{r})}{\sum_{x_{s},x_{r},y_{r}}p_{\text{for}}(y_{r}|\hat{y}_{r},x_{r})\times p_{\text{cf}}(\hat{y}_{r}|x_{r})\times p_{\text{cf}}(x_{s},x_{r})},\\
    \label{forwardcompressionchannelv2}
    p_{\text{cf}}(\hat{y}_{r}|y_{r},x_{r})&:=\frac{\sum_{x_{s},x_{r}}p_{\text{cf}}(y_{r}|\hat{y}_{r},x_{r})\times p_{\text{cf}}(\hat{y}_{r}|x_{r})\times p_{\text{cf}}(x_{s},x_{r})}{\sum_{x_{s},x_{r},y_{r}}p_{\text{cf}}(y_{r}|\hat{y}_{r},x_{r})\times p_{\text{cf}}(\hat{y}_{r}|x_{r})\times p_{\text{cf}}(x_{s},x_{r})}.
\end{align}
Finally, there are two ways of defining the joint distribution, either through (\ref{forwardcompressionchannelv1}) or (\ref{forwardcompressionchannelv2}):
\begin{align}
    \label{forwardforwardchannel}
    p_{\text{for}}(x_{s},x_{r},\hat{y}_{r},y_{r},y_{d})&:=p_{\text{for}}(\hat{y}_{r}|y_{r},x_{r})\times p(y_{r},y_{d}|x_{s},x_{r})\times p_{\text{cf}}(x_{s},x_{r}),\\
    \label{reverseforwardchannel}
    p_{\text{cf}}(x_{s},x_{r},\hat{y}_{r},y_{r},y_{d})&:=p_{\text{cf}}(\hat{y}_{r}|y_{r},x_{r})\times p(y_{r},y_{d}|x_{s},x_{r})\times p_{\text{cf}}(x_{s},x_{r}).
\end{align}
The interdependence among the elements of ${\cal P}_{\text{cf}}$ creates a dichotomy in the space.  We have two sets of compress-forward distributions: the independent set ${\cal P}_{\text{in}}\subset{\cal P}_{\text{cf}}$ and the correlated set ${\cal P}_{\text{cor}}\subset{\cal P}_{\text{cf}}$.  The independent condition is:
\begin{align}
    \label{independentconditionv1}
    p_{\text{cf}}(x_{s},x_{r})&:=p_{\text{cf}}(x_{s})p_{\text{cf}}(x_{r}).
\end{align}
The correlated condition is:
\begin{align}
    \label{correlatedconditionv1}
    p_{\text{cf}}(y_{r}|\hat{y}_{r},x_{r})&:=p_{\text{for}}(y_{r}|\hat{y}_{r},x_{r}),
\end{align}
where $p_{\text{for}}(y_{r}|\hat{y}_{r},x_{r})$ is defined in (\ref{constraintoncorrelatedset}).  The constraint in (\ref{correlatedconditionv1}) is the zero-one circuit-consistency law in a nutshell.  It says that a correlated input $p_{\text{cf}}(x_{s},x_{r})$ requires alignment between the forward (relay) channel and the reverse (compression) channel.  This alignment takes the form of equality between the two different ways of defining the reverse-compression-channel: one as a degree of freedom, the other as a dependent on the input, the prior, and the relay channel itself.  If the forward and reverse channels are not aligned then the input distribution must be independent (this will need to be proved).  We have:
\begin{align}
    \label{independentconditionv2}
    {\cal P}_{\text{ind}}:=\{(p_{\text{cf}}(x_{s},x_{r}),p_{\text{cf}}(y_{r}|\hat{y}_{r},x_{r}),p_{\text{cf}}(x_{s},x_{r})):(\ref{independentconditionv1})\text{ is satisfied}\},\\
    \label{correlatedconditionv2}
    {\cal P}_{\text{cor}}:=\{(p_{\text{cf}}(x_{s},x_{r}),p_{\text{cf}}(y_{r}|\hat{y}_{r},x_{r}),p_{\text{cf}}(x_{s},x_{r})):(\ref{correlatedconditionv1})\text{ is satisfied}\}.
\end{align}
We will also need to further refine ${\cal P}_{\text{ind}}$ and ${\cal P}_{\text{cor}}$ by adding feasibility constraints expressed with mutual information and entropy terms.  However, these terms are going to depend on the joint distributions in (\ref{forwardforwardchannel}) and (\ref{reverseforwardchannel}).  Our first feasibility condition will ensure the compressions cover the space of relay observations:
\begin{align}
    \label{feasibilityset1}
    H_{\text{cf}}(Y_{r}|X_{r})\geq H_{\text{for}}(Y_{r}|X_{r}),
\end{align}
where the entropy terms on the left and right of (\ref{feasibilityset1}) are evaluated via the distributions in (\ref{reverseforwardchannel}) and (\ref{forwardforwardchannel}) respectively.  Our second set of feasibility conditions ensures the destination can recover the compression:
\begin{align}
    \label{feasibilityset2forindependent}
    I_{\text{cf}}(\hat{Y}_{r};Y_{r}|X_{r})&<I_{\text{cf}}(X_{r};Y_{d}),\\
    \label{feasibilityset2forcorrelated}
    I_{\text{for}}(\hat{Y}_{r};Y_{r}|X_{r})&<I_{\text{for}}(\hat{Y}_{r},X_{r};Y_{d}),
\end{align}
where the mutual information terms in (\ref{feasibilityset2forindependent}) and (\ref{feasibilityset2forcorrelated}) are evaluated via (\ref{reverseforwardchannel}) and (\ref{forwardforwardchannel}) respectively.
We have the following:
\begin{align}
    \label{feasiblecorrelateddistribution}
    {\cal P}_{\text{cor-feasible}}&:=\{(p(x_{s},x_{r}),p(y_{r}|\hat{y}_{r},x_{r}),p(\hat{y}_{r}|x_{r}))\in{\cal P}_{\text{ind}}:(\ref{feasibilityset1})\text{ and }(\ref{feasibilityset2forcorrelated})\text{ are satisfied}\},\\
    \label{feasibleindependentdistribution}
    {\cal P}_{\text{ind-feasible}}&:=\{(p(x_{s},x_{r}),p(y_{r}|\hat{y}_{r},x_{r}),p(\hat{y}_{r}|x_{r}))\in{\cal P}_{\text{cor}}:(\ref{feasibilityset1})\text{ and }(\ref{feasibilityset2forindependent})\text{ are satisfied}\}.
\end{align}
Note that (\ref{feasibilityset1}) is trivially satisfied by (\ref{feasiblecorrelateddistribution}) because of (\ref{correlatedconditionv1}).  Now we define three rates: an uncoordinated-compress-forward rate, a coordinated-compress-forward rate, and a decode-forward rate.  First the uncoordinated-compress-forward rate:
\begin{align}
    \label{uncoordinatedcompressforward}
    R_{\text{U-CF}}&:=\max_{(p(x_{s},x_{r}),p(\hat{y}_{r}|x_{r}),p(y_{r}|\hat{y}_{r},x_{r}))\in{\cal P}_{\text{ind-feasible}}}I_{\text{cf}}(X_{s};\hat{Y}_{r},Y_{d}|X_{r}).
\end{align}
Second, the coordinated-compress-forward rate:
\begin{align}
    \label{coordinatedcompressforward}
    R_{\text{C-CF}}&:=\max_{(p(x_{s},x_{r}),p(\hat{y}_{r}|x_{r}),p(y_{r}|\hat{y}_{r},x_{r}))\in{\cal P}_{\text{cor-feasible}}}I_{\text{for}}(X_{s};\hat{Y}_{r},Y_{d}|X_{r}).
\end{align}
The mutual-information terms in (\ref{uncoordinatedcompressforward}) and (\ref{coordinatedcompressforward}) are evaluated via the distribution in (\ref{forwardforwardchannel}) and (\ref{reverseforwardchannel}) respectively.  Finally, define the decode-forward rate:
\begin{align}
    \label{RDF}
    R_{\text{DF}}&:=\max_{p(x_{s},x_{r})}\min\{I(X_{s};Y_{r}|X_{r}),I(X_{s},X_{r};Y_{d})\},
\end{align}
where the joint distribution used in (\ref{RDF}) is  $p(x_{s},x_{r},y_{r},y_{d})=p(y_{r},y_{d}|x_{s},x_{r})p(x_{s},x_{r})$.  
We have the following theorem:
\begin{theorem}
    \label{theoremone}
    The capacity of the relay channel is $\max\{R_{\text{DF}}, R_{\text{U-CF}},R_{\text{C-CF}}\}$.
\end{theorem}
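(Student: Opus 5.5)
The plan splits into achievability and converse, with most of the effort on the converse. For achievability I would handle the three rates separately using standard random-coding arguments built on the typicality results of Section \ref{sec:typicalsequences}. $R_{\text{DF}}$ follows from block-Markov superposition coding: the relay decodes the previous block's message, and the destination uses backward or sliding-window decoding. This yields $\min\{I(X_{s};Y_{r}|X_{r}),I(X_{s},X_{r};Y_{d})\}$. For $R_{\text{U-CF}}$ I would use block-Markov compress-forward without Wyner--Ziv binning, as follows:
\begin{align}
\nonumber
&\text{the relay covers } {\bf y}^{(n)}_{r} \text{ by a compression codebook drawn from } p_{\text{cf}}(\hat{y}_{r}|x_{r}),\\
\nonumber
&\text{the relay sends the compression index on } {\bf x}^{(n)}_{r} \text{ in the next block.}
\end{align}
Covering is guaranteed by (\ref{feasibilityset1}) through the covering lemma applied to the reverse channel $p_{\text{cf}}(y_{r}|\hat{y}_{r},x_{r})$. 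Decodability of the index is guaranteed by (\ref{feasibilityset2forindependent}). Joint typicality decoding of $X_{s}$ from $(\hat{\bf y}_{r},{\bf y}_{d})$ given ${\bf x}_{r}$ then gives $I_{\text{cf}}(X_{s};\hat{Y}_{r},Y_{d}|X_{r})$. For $R_{\text{C-CF}}$ the new ingredient is that the compression partially carries the message: the source superimposes on $X_{r}$ and $\hat{Y}_{r}$ so that the input can be correlated. The destination jointly decodes $(\hat{Y}_{r},X_{r})$ from $Y_{d}$, which is the condition (\ref{feasibilityset2forcorrelated}). The alignment condition (\ref{correlatedconditionv1}) is what makes the jointly typical statistics of the generated compressions match $p_{\text{for}}$.

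For the converse I would mimic the proof of Theorem \ref{theorem:capacityofthepointtopointchannel}.
\begin{enumerate}
\item Pass to significant joint source--relay types and assume a single common type per codebook. Then use compactness of the simplex to extract a subsequence along which the joint type of $({\bf x}_{s},{\bf X}_{r})$ converges to some $p(x_{s},x_{r})$.
\item Split into two regimes according to a quantity in $[0,|{\cal X}_{r}|]$ measuring how much the relay codeword depends on ${\bf y}_{r}$. Compactness again gives a convergent subsequence of this quantity. The limit decides between a symbol-by-symbol regime, which I expect to reduce to decode-forward or the cut-set bound with $I(X_{s};Y_{r}|X_{r})$ active, and a block-coding regime.
\item In the block-coding regime, introduce a genie giving the destination the relay codeword. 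This decouples the source$\rightarrow$(relay, destination) channel from the relay$\rightarrow$destination channel.
\item Interpret the set of ${\bf y}_{r}$ mapped to a given relay transmission as a compression with type-induced prior $p(\hat{y}_{r}|x_{r})$ and reverse channel $p(y_{r}|\hat{y}_{r},x_{r})$.
\item Apply Corollaries \ref{corollary:volumeofsharplytypicalsets} and \ref{corollary:volumeofmarginalYinducedbytype} to the sharp typical sets anchored at these compressions. A sphere-packing count of how many source codewords can carry significant mass inside a single compression then gives $R\le I(X_{s};\hat{Y}_{r},Y_{d}|X_{r})+o(1)$.
\item Obtain (\ref{feasibilityset1}) from the requirement that good codebooks fill the relay observation space. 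Obtain (\ref{feasibilityset2forindependent})/(\ref{feasibilityset2forcorrelated}) from the destination's ability to resolve the compression, again by a sphere-packing count. Then remove the genie by adding these constraints.
\end{enumerate}

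The main obstacle is the step showing that every limiting compress-forward scheme lies in ${\cal P}_{\text{ind}}$ or ${\cal P}_{\text{cor}}$. This is the zero-one circuit-consistency law: if the input is correlated, then the empirical reverse channel induced by the codebook must equal $p_{\text{for}}$. My first attempt would be to compute the joint empirical type of $({\bf x}_{s},{\bf x}_{r},\hat{\bf y}_{r},{\bf y}_{r})$ in two ways. The first is through the channel acting on the codewords, which gives (\ref{forwardforwardchannel}). The second is through the compression codebook's prior and reverse channel, which gives (\ref{reverseforwardchannel}). I would then argue that any mismatch forces $X_{s}$ to be asymptotically independent of $X_{r}$ along the subsequence. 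I am least confident this argument closes, because the relay codeword's dependence on ${\bf x}_{s}$ enters only through ${\bf y}_{r}$. Making the resulting bound genuinely tight, rather than merely an outer bound on a restricted class of schemes, is the crux.
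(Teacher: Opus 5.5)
Your outline is essentially the paper's own architecture: significant joint types and compactness, two regimes from compactness of $[0,|\mathcal{X}_r|]$, a genie, the relay map read as a compression, sphere packing, feasibility constraints, and the zero-one law. But the converse has a genuine gap at steps 4--5, and the paper leaves the same gap open. Corollaries \ref{corollary:volumeofsharplytypicalsets} and \ref{corollary:volumeofmarginalYinducedbytype} control the volumes of (conditionally) typical sets defined by single-letter statistics. They say nothing about an arbitrary bin $f_r^{-1}(\mathbf{x}_r)\subseteq\mathcal{Y}_r^{n}$. How many source codewords put non-negligible mass on such a bin depends on its shape relative to the codebook, not just on its size and a single-letter type.

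Concretely, take a superposition source codebook with cloud centers of rate $R_r$, and let the relay map be ``decode the cloud center'' (partial decode-forward). Its bins have average size $2^{n(H(Y_r|X_r)-R_r)}$, exactly what a compression of rate $R_r$ would have. Yet each bin collects its mass from about $2^{n(R-R_r)}$ codewords. That is at most, and in general exponentially smaller than, the count $2^{n(R-I(X_s;\hat{Y}_r|X_r))}$ you would assign using any test channel with $I(\hat{Y}_r;Y_r|X_r)=R_r$, since $I(X_s;\hat{Y}_r|X_r)\le I(\hat{Y}_r;Y_r|X_r)$. The converse needs the genie-scoped rate to be \emph{at least} $R-I(X_s;\hat{Y}_r|X_r)$, i.e., (\ref{genieconstraint}) holding with equality. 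So $R\le I(X_s;\hat{Y}_r,Y_d|X_r)+o(1)$ does not follow.

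Extracting a single-letter $\hat{Y}_r$ from an arbitrary relay code is the whole difficulty of the relay converse. The standard Fano/chain-rule identification only reaches the cut-set bound, which is known not to be tight. The paper does not supply this step either. MA-6 simply posits that the reverse-channel types converge to some $p(y_r|\hat{y}_r,x_r)$ with $H(Y_r|\hat{Y}_r,X_r)=R_0$, and builds a fictitious i.i.d.\ compression codebook on top of the actual relay codebook. WU-1 asserts the dichotomy without proof.

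The crux you flag cannot be closed as formulated, and the same example shows why. In (\ref{constraintoncorrelatedset}), $\hat{y}_r$ enters only through the factor $p_{\text{cf}}(\hat{y}_r|x_r)$. Once normalized as a law on $y_r$ it is just $p(y_r|x_r)$, whose Bayes inverse is $p_{\text{cf}}(\hat{y}_r|x_r)$. Hence (\ref{correlatedconditionv1}) forces $\hat{Y}_r$ to be conditionally independent of $Y_r$ given $X_r$. Under (\ref{forwardforwardchannel}) this gives $R_{\text{C-CF}}\le\max_{p(x_s,x_r)}I(X_s;Y_d|X_r)$. The lemma you want (``mismatch forces independent inputs'') therefore says correlated inputs occur only with a relay map conveying nothing about $\mathbf{y}_r$. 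The cloud-center code is a counterexample: it can have correlated $(X_s,X_r)$ and a relay map carrying $R_r>0$ bits about $\mathbf{y}_r$.

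Your C-CF achievability runs into the same issue. For the source to superimpose on $\hat{\mathbf{y}}_r$, it must know the index in advance, so the relay is decoding a cloud center. The operative law then has the structure $\hat{Y}_r-(X_s,X_r)-Y_r$, not the test-channel structure of (\ref{forwardforwardchannel}). What the scheme achieves is the partial-decode-forward rate $\min\{I(\hat{Y}_r;Y_r|X_r)+I(X_s;Y_d|X_r,\hat{Y}_r),\,I(X_s,X_r;Y_d)\}$, not $I_{\text{for}}(X_s;\hat{Y}_r,Y_d|X_r)$.

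Finally, (\ref{feasibilityset1}) is not a covering condition. Both (\ref{forwardforwardchannel}) and (\ref{reverseforwardchannel}) are the true channel law times a forward test channel, so they share the $(X_s,X_r,Y_r,Y_d)$ marginal. Thus (\ref{feasibilityset1}) holds with equality for every choice of parameters. Covering in U-CF requires a compression rate above $I(\hat{Y}_r;Y_r|X_r)$, with codewords drawn from the $\hat{Y}_r|X_r$ marginal of the forward law, which need not be the prior $p_{\text{cf}}(\hat{y}_r|x_r)$.
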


\section{Proof of Theorem 4}
\label{sec:proofoftheorem1}
If the channel capacity $C<\max_{p(x_{s},x_{r})}I(X_{s};Y_{r}|X_{r})$ then the classical decode-forward scheme achieves it. It remains to show that $C=\max\{R_{\text{C-CF}},R_{\text{U-CF}}\}$ when $C>\max_{p(x_{s},x_{r})}I(X_{s};Y_{r}|X_{r})$ (i.e., when the relay cannot decode the source message).  

Here is the architecture of the proof.  We restrict our attention to block encoding/decoding schemes and ``good'' source codebooks that fill up the space of received sequences at the relay.  Then we develop the following genie-aided argument to separate the relay $\rightarrow$ destination channel from the source $\rightarrow$ (relay, destination) channel.  If a genie reveals the relay transmission to the destination, then the destination decoder can focus on source codewords that significantly contribute to the probability mass of the relay transmission.  We show that the rate on the genie-scoped version of the source codebook is (approximately) $R-I(X_{s};\hat{Y}_{r}|X_{r})$.  We use the sphere-packing argument to get this rate.  From here, we get the compression-cut set bound $R<I(X_{s};\hat{Y}_{r},Y_{d}|X_{r})$.  

Finally, we justify our initial decision to examine good block encoding/decoding schemes.  In particular, there are two regimes of interest: a block-coding regime and a symbol-by-symbol regime.  We also defend the coordinated/uncoordinated dichotomy in the compress-forward setting and introduce feasibility constraints to compensate the genie.  On the achievability side, we introduce the coordinated-compress-forward scheme (which replaces the partial decode-forward scheme in \cite{Cover1979}).

\subsection{Outer-Bound: Setup}
{\bf SU-1)}  We focus on source and relay encoding functions that use block encoding.  In the block encoding scheme, transmission occurs over $B$ blocks of $n$ channel uses.  In each block, there is a source encoding function: 
\begin{align}
\label{def:sourceblockencoding}
f^{(n,B)}_{s}:M^{(n,B)}_{r}\times M^{(n,B)}_{s}\rightarrow{\bf x}^{(n,B)}_{s},    
\end{align}
where $M^{(n,B)}_{s}$ denotes the message set assigned to block $B$ and $M^{(n,B)}_{r}$ is a relay message set that indexes all possible sequences the relay can transmit in block $B$.  Associated with the source encoding function in (\ref{def:sourceblockencoding}) is a relay encoding function in block $B+1$:  

\begin{align}
\label{def:relayblockencoding}
f^{(n,B+1)}_{r}:{\bf y}^{(n,B)}_{r}\rightarrow{\bf x}^{(n,B+1)}_{r},    
\end{align}

We assume the source message set satisfies $|M^{(n,B)}_{s}|\doteq R$.  More generally, the source rate can vary with $n$ and $B$, we fold this scenario into our analysis of the block-coding regime and symbol-by-symbol regime in the wrap-up.  We do allow the rate of the relay message set $|M^{(n,B)}_{r}|\doteq R^{(n,B)}_{r}$ to vary with $n$ and $B$.  Apart from the block encoding structure in (\ref{def:sourceblockencoding}) and (\ref{def:relayblockencoding}) and the assumption on the size of the source message set, we assume no other structure on the source codebook and relay codebooks (i.e., they are not i.i.d).

{\bf SU-2)} In general, every relay codeword transmitted in block $B$ is pinned to a (possibly unique) source codebook in block $B$.  Let us freeze $n$ and $B$ and examine an experimental instance in which the relay sends some codeword ${\bf x}^{(n,B)}_{r}(m_{r})$.  Associated with this codeword is a source codebook:
\begin{align}
    \label{sequenceofcodebooks}
    {\cal C}^{(n,B)}_{s}(R)|_{m_{r}}:=\{{\bf x}^{(n,B)}_{s}(m_{s},m_{r}):m_{s}\in M^{(n,B)}_{s}\}.
\end{align}
Each codeword pair $({\bf x}^{(n,B)}_{s}(m_{s},m_{r}),{\bf x}^{(n,B)}_{r}(m_{r}))$ has a joint type ${\bf p}^{(n,B)}_{s,r}$ in the simplex over ${\cal X}_{s}\times{\cal X}_{r}$.  We can restrict our attention to the codeword pairs (pinned to ${\bf x}_{r}(m_{r})$) with ``significant'' types which have a non-vanishing fraction of the mass.  In our experimental instance, the source is going to pick a source codeword such that $({\bf x}^{(n,B)}_{s}(m_{s},m_{r}),{\bf x}^{(n,B)}_{r}(m_{r}))$ has joint type ${\bf p}^{(n,B)}_{s,r}$.  With high probability, this type will be significant.  We can then assume that the codewords in ${\cal C}^{(n)}_{s}(R)|_{m_{r}}$ with ${\bf x}^{(n,B)}_{r}$ share the same significant type ${\bf p}^{(n,B)}_{s,r}$ (because this subcodebook needs to be decodeable if the larger subcodebook is decodeable).  We are repeating the same argument from section \ref{sec:pointtopointchannel} except now for codeword pairs.  Note that the index pair $(m_{s},m_{r})$ in our experiment is sensitive to both $n$ and $B$.  Now we unfreeze $n$ (but keep $B$ fixed). We take a subsequence $n_{k}$  so that: 
\begin{align}
    \label{limitoftypes}
    {\bf p}_{s,r}=\lim_{k\rightarrow\infty}{\bf p}^{(n_{k},B)}_{s,r}.
\end{align}
The subsequence (\ref{limitoftypes}) exists because the simplex is compact.

{\bf SU-3)}  Suppose a genie reveals the relay codeword ${\bf x}^{(n_{k},B+1)}_{r}$ to the destination at the end of block $B$.  Strictly speaking, every codeword in the source codebook ${\cal C}^{(n_{k},B)}_{s}(R)|_{m_{r}}$ contributes to the probability mass of ${\bf x}^{(n_{k},B+1)}_{r}$.  However, some codewords contribute more to this mass than others.  We ignore a subset of codewords whose collective contribution is a vanishing fraction of this mass to get a genie-scoped codebook $\hat{\cal C}^{(n_{k},B)}_{s}(R^{(B)}_{g})\subseteq{\cal C}^{(n_{k},B)}_{s}(R)$.  Implicit in this formulation is the assertion that the codebook rate is converging to $R^{(B)}_{g}$.  Let $p(x_{s},_{r})$ denote the  distribution over ${\cal X}_{s}\times{\cal X}_{r}$ that satisfies:
\begin{align}
    \label{typesconvergetodistribution}
    p(x_{s},x_{r}):={\bf p}_{s,r}.
\end{align}
That is, $p(x_{s},x_{r})$ corresponds to the point on the simplex to which our subsequence of types is converging.  For the destination to pick the actual transmitted source codeword out of $\hat{\cal C}^{(n_{k},B)}_{s}(R^{(B)}_{g})$, we have the following constraint:  
\begin{align}
    \label{classicbound}
    R^{(B)}_{g}<I(X_{s};Y_{d}|X_{r}),
\end{align}
where the mutual information in (\ref{classicbound}) is defined by the distribution $p(y_{r},y_{d}|x_{s},x_{r})p(x_{s},x_{r})$ and $p(x_{s},x_{r})$ is defined by (\ref{limitoftypes}) and (\ref{typesconvergetodistribution}).  In the classical compress-forward scheme using i.i.d source codebooks and i.i.d compressions, one can show that each relay codeword maps one-to-one with a compression $\hat{\bf y}_{r}$: 
\begin{align}
\label{genieconstraint}
R^{(B)}_{g}<R-I(X_{s};\hat{Y}_{r}|X_{r}).   
\end{align}
Here, the codewords are structured and the relay encoding function is arbitrary.  Our main task in the next section is to show that (\ref{genieconstraint}) is active even in the structured, arbitrary setting.  

\subsection{Outer-Bound: Main Argument}
{\bf MA-1)} Over the next few steps we develop a notion of good source codebooks to get an outer-bound.  We have already decided that the relay sends ${\bf x}^{(n_{k},B)}_{r}$ in block $B$ which is indexed by $m_{r}$. Now suppose the source sends the codeword ${\bf x}^{(n_{k},B)}_{s}$ in block $B$. Let us fix ${\bf y}^{(n_{k},B)}_{r}$ such that: 
\begin{align}
\label{orbitofacodeword}
{\bf y}^{(n_{k},B)}_{r}\in T^{(n_{k},B)}_{*}(Y_{r}|{\bf x}^{(n_{k},B)}_{r},{\bf x}^{(n_{k},B)}_{s}).    
\end{align}
We will say that the ${\bf y}^{(n_{k},B)}_{r}$ in (\ref{orbitofacodeword}) is in the orbit of ${\bf x}^{(n_{k},B)}_{s}$.
Conditioned on $({\bf x}^{(n_{k},B)}_{s},{\bf x}^{(n_{k},B)}_{r})$ this relay observation has approximate mass:
\begin{align}
    \label{approximatemassofaobservationintheorbit}
    2^{-n_{k}H(Y_{r}|X_{s},X_{r})}.
\end{align}
By ``approximate'', we mean that the quantity may differ in the normalized exponent by some $\gamma^{(n_{k})}$ and the sequence $\{\gamma^{(n_{k})}:k\in\mathbb{N}\}$ is vanishing.  In the case of (\ref{approximatemassofaobservationintheorbit}), this implies the normalized exponent lies in the interval:  
\begin{align}
    [H(Y_{r}|X_{s},X_{r})-\gamma^{(n_{k})},H(Y_{r}|X_{s},X_{r})+\gamma^{(n_{k})}].
\end{align}
The distribution used to get the mass in (\ref{approximatemassofaobservationintheorbit}) is denoted by:
\begin{align}
\label{relaydistributionpercodeword}
\mu_{Y_{r}|{\bf x}^{(n_{k},B)}_{s},{\bf x}^{(n_{k},B)}_{r}}(\hspace{0.5mm}\cdot\hspace{0.5mm}),  
\end{align}
where (\ref{relaydistributionpercodeword}) is generated via $p(y_{r}|x_{s},x_{r})$.  Now after taking subsequences in {\bf SU-2}, we also have a subsequence of codebooks in (\ref{sequenceofcodebooks}) which gives:
\begin{align}
    \label{subsequenceofcodebooks}
    {\cal C}^{(n_{k},B)}_{s}(R)|_{m_{r}}:=\{{\bf x}^{(n_{k},B)}_{s}(m_{s},m_{r}):m_{s}\in M^{(n_{k},B)}_{s}\}.
\end{align}
Every codeword in the codebook in (\ref{subsequenceofcodebooks}) has mass:
\begin{align}
\label{massofacodeword}
\frac{1}{|M^{(n_{k},B)}_{s}|}.    
\end{align}

{\bf MA-2)} Now let us fix some ${\bf y}^{(n_{k},B)}_{r}$ in the orbit of the codebook in (\ref{subsequenceofcodebooks}) so that:
\begin{align}
\label{sourcerelaycodewordorbit}
{\bf y}^{(n_{k},B)}_{r}\in\bigcup_{m\in M^{(n_{k},B)}_{s}}T^{(n_{k},B)}_{*}(Y_{r}|{\bf x}^{(n_{k},B)}_{s}(m),{\bf x}^{(n_{k},B)}_{r}).    
\end{align}
We want to find the mass of ${\bf y}^{(n_{k},B)}_{r}$ after weighting the codewords according to (\ref{massofacodeword}) but we have a problem.  All codewords in (\ref{subsequenceofcodebooks}) contribute to the mass of ${\bf y}^{(n_{k},B)}_{r}$ in one of two different ways, and this division of labor depends on the structure of the codebook which is arbitrary.  We look at the mass of an observation ${\bf y}^{(n_{k},B)}_{r}$ given by the measure (\ref{relaydistributionpercodeword}).  The first way a codeword ${\bf x}^{(n_{k},B)}_{s}$ could contribute to the mass of ${\bf y}^{(n_{k},B)}_{r}$ is through the true channel $p(y_{r}|x_{s},x_{r})$ in which case ${\bf y}^{(n_{k},B)}_{r}$ satisfies (\ref{orbitofacodeword}) and has mass (\ref{approximatemassofaobservationintheorbit}).  The second way is through one of a polynomial number of false channels $p^{\prime}(y_{r}|x_{s},x_{r})$ in which case the conditional mass is given by:
\begin{align}
    \label{massviaafalsechannel}
    2^{-n_{k}(H(Y_{r}|X_{s},X_{r})+D)},
\end{align}
where the $D$ in (\ref{massviaafalsechannel}) is the divergence between $p(y_{r}|x_{s},x_{r})$ and $p^{\prime}(y_{r}|x_{s},x_{r})$.  We define the mass of ${\bf y}^{(n_{k},B)}_{r}$ as the weighted average of the mass coming from all codewords:
\begin{align}
\label{relaydistributionovercodebook}
\mu_{Y_{r}|{\cal C}^{(n_{k},B)}_{s}(R)|_{m_{r}}}({\bf y}^{(n_{k},B)}_{r}):=\sum_{m\in M^{(n_{k},B)}_{s}}\mu_{Y_{r}|{\bf x}^{(n_{k},B)}_{s}(m),{\bf x}^{(n_{k},B)}_{r}}({\bf y}^{(n_{k},B)}_{r})\times\frac{1}{|M^{(n_{k},B)}_{s}|}.    
\end{align}

{\bf MA-3)}  We would like to classify the observation sequences generated by the codebook according to their mass, per the measure in (\ref{relaydistributionovercodebook}).  For some vanishing $\{\epsilon^{(n_{k})}_{I}:k\in\mathbb{N}\}$ define the set:
\begin{align}
    \label{definingthesetofrelayobservations}
    I^{(n_{k},B)}:=\{{\bf y}^{(n_{k},B)}_{r}:2^{-n_{k}(H(Y_{r}|X_{r})+\epsilon^{(n_{k})}_{I})}<\mu_{Y_{r}|{\cal C}^{(n_{k},B)}_{s}(R)|_{m_{r}}}({\bf y}^{(n_{k},B)}_{r})<2^{-n_{k}(H(Y_{r}|X_{r})-\epsilon^{(n_{k})}_{I})}\},
\end{align}
where the entropy $H(Y_{r}|X_{r})$ in (\ref{definingthesetofrelayobservations}) is defined with respect to the distribution $p(y_{r},y_{d}|x_{s},x_{r})p(x_{s},x_{r})$ and $p(x_{s},x_{r})$ is defined in (\ref{typesconvergetodistribution}).  The focus on sequences of codebooks $\{{\cal C}^{(n_{k},B)}_{s}(R)|_{m_{r}}:k\in\mathbb{N}\}$ for which there exists some vanishing $\{\epsilon^{(n_{k})}_{I}:k\in\mathbb{N}\}$ and some fixed $\epsilon>0$ such that:
\begin{align}
    \label{measureofinterval}
    \mu_{Y_{r}|{\cal C}^{(n_{k},B)}_{s}(R)|_{m_{r}}}(I^{(n_{k},B)})>\epsilon.
\end{align}
A sequence of codebooks that satisfies (\ref{measureofinterval}) has a sequence of good subcodebooks, and each good subcodebook has enough codewords from the original to have mass (possibly small mass, but non-vanishing).  Since they have non-vanishing mass, their rates asymptotically converge to $R$.  The observations generated by this  sequence of good source codebooks fill up the relay observation space.  The remainder of our analysis will focus on the sequence of good source subcodebooks, which for convenience, we keep denoting by $\{{\cal C}^{(n_{k},B)}_{s}(R)|_{m_{r}}:k\in\mathbb{N}\}$ instead of further complicating the notation.

{\bf MA-4)}  For any ${\bf y}^{(n_{k},B)}_{r}\in I^{(n_{k},B)}$, we want to find the number of codewords contributing to the mass of ${\bf y}^{(n_{k},B)}_{r}$.  More precisely, we want the normalized exponent on the number of codewords contributing a non-vanishing fraction of the mass of ${\bf y}^{(n_{k},B)}_{r}$ (because technically all codewords are contributing to its mass).  The number of relay observations is approximately:
\begin{align}
    \label{numberofrelayobservations}
    2^{n_{k}H(Y_{r}|X_{r})},
\end{align}
where (\ref{numberofrelayobservations}) follows from (\ref{measureofinterval}).  The number of observations in the orbit of each codeword is approximately:
\begin{align}
    \label{numberofobservationspercodeword}
    2^{n_{k}H(Y_{r}|X_{s},X_{r})}.
\end{align}
Each observation in the orbit of a codeword has approximate conditional mass:
\begin{align}
    \label{massofanobservationintheorbit}
    2^{-n_{k}H(Y_{r}|X_{s},X_{r})}.
\end{align}
Each codeword has the same mass given by (\ref{massofacodeword}) and $\frac{1}{n_{k}}\log_{2}|M^{(n_{k},B)}_{s}|\doteq R$.  Moreover, each ${\bf y}^{(n_{k},B)}_{r}\in I^{(n_{k},B)}$ is getting mass either through the true channel $p(y_{r}|x_{s},x_{r})$ (which gives (\ref{massofanobservationintheorbit})) or at most a polynomial number of false channels each of which give (\ref{massviaafalsechannel}).  It follows from (\ref{numberofrelayobservations})-(\ref{massofanobservationintheorbit}) that the number of codewords contributing a non-vanishing fraction of the mass to each ${\bf y}^{(n_{k},B)}_{r}\in I^{(n_{k},B)}$ is approximately:
\begin{align}
    \label{numberofcodewordscontributingmasstoanobservation}
    2^{n_{k}(R-I(X_{s};Y_{r}|X_{r}))},
\end{align}
where all the entropy terms in (\ref{numberofrelayobservations})-(\ref{numberofcodewordscontributingmasstoanobservation}) are made with respect to the distribution
$p(y_{r},y_{d}|x_{s},x_{r})p(x_{s},x_{r})$ and $p(x_{s},x_{r})$ is given by (\ref{typesconvergetodistribution}).  We are going to play this argument forward into the compression-space.

{\bf MA-5)}  Now each observation ${\bf y}^{(n_{k},B)}_{r}\in I^{(n_{k},B)}$ maps to a relay sequence ${\bf x}^{(n_{k},B+1)}_{r}$ according to the mapping: 
\begin{align}
    \label{relayencodingfunctionsubsequence}
    f^{(n_{k},B+1)}_{r}:{\bf y}^{(n_{k},B)}_{r}\rightarrow{\bf x}^{(n_{k},B+1)}_{r}.
\end{align}
This mapping creates the following relay codebook: 
\begin{align}
    \label{relaycodebook}
    {\cal C}^{(n_{k},B+1)}_{r}(R^{(n_{k},B+1)}_{r})&:=\{{\bf x}^{(n_{k},B+1)}_{r}(m):m\in M^{(n_{k},B+1)}_{r}\}. 
\end{align}
In (\ref{relaycodebook}), we've indexed all the codewords generated by the mapping (\ref{relayencodingfunctionsubsequence}).  The index set $M^{(n_{k},B)}_{r}$ has variable rate $R^{(n_{k},B)}_{r}$ with respect to $n_{k}$ and $B$ because (\ref{relayencodingfunctionsubsequence}) is arbitrary.  We take another subsequence $\{n_{k}:k\in\mathbb{N}\}$ to get a sequence of relay codebooks that satisfies the following two properties:

{\bf P1)} $\lim_{k\rightarrow\infty}R^{(n_{k},B+1)}_{r}=R_{r}$,

{\bf P2)} $R_{r}$ is the smallest rate for which ({\bf P1}) exists.  

Note that since $R^{(n_{k},B+1)}_{r}\in[0,|{\cal X}_{r}|]$, both conditions are well-defined.  Let us define $R_{0}$ as follows:
\begin{align}
    \label{R0}
    R_{0}:=H(Y_{r}|X_{r})-R_{r}
\end{align}
Each codeword ${\bf x}^{(n_{k},B+1)}_{r}\in{\cal C}^{(n_{k},B+1)}_{r}(R^{(n_{k},B+1)}_{r})$ has an inverse image via (\ref{relayencodingfunctionsubsequence}) in the ${\cal Y}^{(n_{k})}_{r}$ space.  Define the inverse image function $f^{-1}_{r}(\cdot)$ as follows:
\begin{align}
    \label{inverseimagefunction}
    f^{-1}_{r}({\bf x}^{(n_{k},B+1)}_{r}):=\{{\bf y}^{(n_{k},B)}_{r}:f^{(n_{k},B+1)}_{r}({\bf y}^{(n_{k},B)}_{r})={\bf x}^{(n_{k},B+1)}_{r}\}.
\end{align}
We are going to filter the relay codebooks by only keeping codewords ${\bf x}^{(n_{k},B+1)}_{r}$ that satisfy the following rule:

{\bf P3)}  For some vanishing $\{\epsilon^{(n_{k})}_{r}:k\in\mathbb{N}\}$:
\begin{align}
    \label{boundsonthesizeoftheinverseimageofacompression}
    R_{0}-\epsilon^{(n_{k})}_{r}<\frac{1}{n_{k}}\log_{2}|f^{-1}_{r}({\bf x}^{(n_{k},B+1)}_{r})|<R_{0}+\epsilon^{(n_{k})}_{r}.
\end{align}
We will keep the same notation ${\cal C}^{(n_{k},B+1)}_{r}(R^{(n_{k},B+1)}_{r})$ to denote the filtered relay codebook for convenience.  Now the measure (\ref{relaydistributionovercodebook}) puts mass on the relay codebook via the mapping (\ref{relayencodingfunctionsubsequence}).  It follows from (P2) that there is an $\epsilon>0$ such that for all $k\in\mathbb{N}$:
\begin{align}
    \label{filtererdcodebookstillhasmass}
    \mu_{Y_{r}|{\cal C}^{(n_{k},B)}_{s}(R)|_{m_{r}}}({\cal C}^{(n_{k},B+1)}_{r}(R^{(n_{k},B+1)}_{r}))>\epsilon.
\end{align}
The significance of (\ref{filtererdcodebookstillhasmass}) is that the filtered relay codebook still gets mass from the source codebook (i.e., a source subcodebook with rate close to $R$), and therefore cannot be ignored.  

To close out this step, we note that each relay codeword ${\bf x}^{(n_{k},B+1)}_{r}$ is a compression because the mapping (\ref{relayencodingfunctionsubsequence}) is surjective.  We will build a codebook of compressions on top of the relay codeword ${\bf x}^{(n_{k},B)}_{r}$ sent in block $B$.  We select a compression prior distribution $p(\hat{y}_{r}|x_{r})$.  For each codeword ${\bf x}^{(n_{k},B+1)}_{r}\in\hat{\cal C}^{(n_{k},B+1)}_{r}(R_{r})$, we generate a compression codeword $\hat{\bf y}^{(n_{k},B)}_{r}$ i.i.d according to $p(\hat{y}_{r}|x_{r})$.  This gives us a compression codebook: 
\begin{align}
\label{def:compressioncodebook}
\hat{\cal C}^{(n_{k},B)}_{r}(R_{r}):=\{\hat{\bf y}^{(n_{k},B)}_{r}(m):m\in\hat{M}^{(n_{k},B)}_{r}\},    
\end{align}
where every ${\bf y}^{(n_{k},B)}_{r}\in I^{(n_{k},B)}$ maps to a relay compression $\hat{\bf y}^{(n_{k},B)}_{r}$ via some mapping:
\begin{align}
    \label{def:compressionmapping}
    \hat{f}^{(n_{k},B)}_{r}:{\bf y}^{(n_{k},B)}_{r}\rightarrow\hat{\bf y}^{(n_{k},B)}_{r}.
\end{align}

{\bf MA-6)}  Consider the triple $({\bf x}^{(n_{k},B)}_{r},{\bf y}^{(n_{k},B)}_{r},\hat{\bf y}^{(n_{k},B)}_{r})$ induced by (\ref{def:compressionmapping}).  This triple has a conditional type $p^{(n_{k})}(y_{r}|\hat{y}_{r},x_{r})$ in the simplex of reverse-compression-channel distributions over ${\cal X}_{r}\times{\cal Y}_{r}\times\hat{\cal Y}_{r}$.  Associated with this type is the joint distribution:
\begin{align}
    \label{typeinducedbycompressionmapping}
    p^{(n_{k})}(x_{s},x_{r},y_{r},\hat{y}_{r})=p^{(n_{k})}(y_{r}|\hat{y}_{r},x_{r})p(\hat{y}_{r}|x_{r})p(x_{s},x_{r}),
\end{align}
where the compression prior $p(\hat{y}_{r}|x_{r})$ was selected in MA-5 to get the compression codebook (\ref{def:compressioncodebook}) and $p(x_{s},x_{r})$ comes from (\ref{typesconvergetodistribution}).  Associated with (\ref{typeinducedbycompressionmapping}) is a conditional entropy:
\begin{align}
    \label{entropyoftype}
    H^{(n_{k})}(Y_{r}|\hat{Y}_{r},X_{r}).
\end{align}
Now the types in (\ref{typeinducedbycompressionmapping}) are arbitrary because the mapping (\ref{def:compressionmapping}) is also arbitrary, but nevertheless, there are things we can say.  For instance, we know that there can't be too many types where $H^{(n_{k})}(Y_{r}|\hat{Y}_{r},X_{r})<R_{0}-\epsilon^{(n_{k})}_{r}$ because otherwise $I^{(n_{k},B)}$ won't be covered.  In fact, we can say more.  We don't really care about the specific channel types induced by (\ref{def:compressionmapping}).  Rather, we want to know how many codewords contribute to the mass of a compression.  The inverse image of each compression is size-wise approximately the same via (\ref{boundsonthesizeoftheinverseimageofacompression}) and each observation has about the same mass via (\ref{definingthesetofrelayobservations}).  So for the purposes of understanding the mass of a compression, we can assume that the channel types converge to some $p(y_{r}|\hat{y}_{r},x_{r})$ whose entropy satisfies:
\begin{align}
    \label{conditiononchanneltypes}
    H(Y_{r}|\hat{Y}_{r},X_{r})=R_{0}.
\end{align}
Given $p(y_{r}|\hat{y}_{r},x_{r})$ we define the forward-compression-channel $p(\hat{y}_{r}|y_{r},x_{r})$ as in (\ref{forwardcompressionchannelv2}).  The forward-compression-channel in projects mass into the compression space:
\begin{align}
    \label{compressionspace}
    T^{(n_{k},B)}_{*}(\hat{Y}_{r}|{\bf x}^{(n_{k},B)}_{r}).
\end{align}
Moreover, (\ref{measureofinterval}) implies there are approximately $2^{n_{k}H(Y_{r}|X_{r})}$ observations each having approximate mass $2^{-n_{k}H(Y_{r}|X_{r})}$.  Therefore every $\hat{\bf y}^{(n_{k},B)}_{r}\in T^{(n_{k},B)}_{*}(\hat{Y}_{r}|{\bf x}^{(n_{k},B)}_{r})$ has approximate mass:
\begin{align}
    \label{massonacompression}
    2^{-n_{k}H(\hat{Y}_{r}|X_{r})}.
\end{align}
It is important to note that the compression codebook $\hat{\cal C}^{(n_{k},B)}_{r}(R_{r})$ is much smaller in size than the compression space in (\ref{compressionspace}).  Most of the compressions in (\ref{compressionspace}) are ``virtual'' compressions or ``dummy'' compressions.  However:
\begin{align}
    \label{compressioncodebookisinthecompressionspace}
    \hat{\cal C}^{(n_{k},B)}_{r}(R_{r})\subseteq T^{(n_{k},B)}_{*}(\hat{Y}_{r}|{\bf x}^{(n_{k},B)}_{r}),
\end{align}
which means every $\hat{\bf y}^{(n_{k},B)}_{r}\in\hat{\cal C}^{(n_{k},B)}_{r}(R_{r})$ also has approximate mass given by (\ref{massonacompression}).  Now fix ${\bf x}^{(n_{k},B)}_{s}\in{\cal C}^{(n_{k},B)}_{s}(R)$ and consider the compression orbit of ${\bf x}^{(n_{k},B)}_{s}$:
\begin{align}
    \label{compressionorbit}
    \{\hat{\bf y}^{(n_{k},B)}_{r}:\hat{\bf y}^{(n_{k},B)}_{r}\in T^{(n_{k},B)}_{*}(\hat{Y}_{r}|{\bf y}^{(n_{k},B)}_{r},{\bf x}^{(n_{k},B)}_{r}),{\bf y}^{(n_{k},B)}_{r}\in T^{(n_{k},B)}_{*}(Y_{r}|{\bf x}^{(n_{k},B)}_{s},{\bf x}^{(n_{k},B)}_{r})\}.
\end{align}
According to the Markov Lemma, the number of compressions in (\ref{compressionorbit}) is approximately:
\begin{align}
    \label{volumeofcompressionsincompressionorbit}
    2^{n_{k}H(\hat{Y}_{r}|X_{s},X_{r})},
\end{align}
and every $\hat{\bf y}^{(n_{k},B)}_{r}$ in (\ref{compressionorbit}) conditioned on ${\bf x}^{(n_{k},B)}_{s}$ has approximate mass:
\begin{align}
    \label{massofeachcompression}
    2^{-n_{k}H(\hat{Y}_{r}|X_{s},X_{r})}.
\end{align}
From the same sphere-packing argument in (MA-4) applied to (\ref{massonacompression}), (\ref{volumeofcompressionsincompressionorbit}), and (\ref{massofeachcompression}), the number of codewords contributing a non-vanishing fraction of the mass of each $\hat{\bf  y}^{(n_{k},B)}_{r}\in T^{(n_{k},B)}_{*}(\hat{Y}_{r}|{\bf x}^{(n_{k},B)}_{r})$ is approximately:
\begin{align}
    \label{codewordsmappingtoacompression}
    2^{n_{k}(R-I(X_{s};\hat{Y}_{r}|X_{r}))}.
\end{align}
The expression in (\ref{codewordsmappingtoacompression}) applies to all compressions in $T^{(n_{k},B)}_{*}(\hat{Y}_{r}|{\bf x}^{(n_{k},B)}_{r})$, including those in $\hat{\cal C}^{(n_{k},B)}_{r}(R_{r})$.  This gives us the expression in (\ref{genieconstraint}) which leads to the compression-cut-set mutual-information bound in (\ref{uncoordinatedcompressforward}) and (\ref{coordinatedcompressforward}).

\subsection{Outer-Bound: Wrap-Up}
{\bf WU-1)}  We justify the dichotomy in the space of compress-forward schemes which only allows one degree-of-freedom between the input distribution $p(x_{s},x_{r})$ and the reverse-compression-channel  $p(y_{r}|\hat{y}_{r},x_{r})$.  Notice that if:
\begin{align}
    R_{0}<H(Y_{r}|X_{s},X_{r}),
\end{align}
then the conditional probability of a codeword landing on any individual compression vanishes, which is asymptotically equivalent to uniformly selecting any compression in the compression-codebook $\hat{\cal C}^{(n_{k},B)}_{r}(R_{r})$.  This gives an independent input distribution $p(x_{s},x_{r})=p(x_{s})p(x_{r})$.  More generally, if the mapping $p(y_{r}|\hat{y}_{r},x_{r})$ is not aligned with the channel as per (\ref{constraintoncorrelatedset}) then any pair $({\bf y}^{(n_{k},B)}_{r},\hat{\bf y}^{(n_{k},B)}_{r})$ induced by the mapping has vanishing probability (conditioned on the codeword), which again is equivalent to uniformly selecting a compression.

{\bf WU-2)} Next we justify the assumption made to focus on codebooks that satisfy (\ref{measureofinterval}) (i.e., ``good'' codebooks).  For some fixed $\epsilon_{0}>0$, define:
\begin{align}
    \label{definingthecomplementtothesetofrelayobservations}
    \bar{I}^{(n_{k},B)}:=\{{\bf y}^{(n_{k},B)}_{r}:\mu_{Y_{r}|{\cal C}^{(n_{k},B)}_{s}(R)|m_{r}}({\bf y}^{(n_{k},B)}_{r})<2^{-n_{k}(H(Y_{r}|X_{r})-\epsilon_{0})}\}.
\end{align}
If (\ref{measureofinterval}) is violated, then for some $\epsilon_{0},\epsilon>0$ we have:
\begin{align}
    \label{measureofthecomplement}
    \mu_{Y_{r}|{\cal C}^{(n_{k},B)}_{s}(R)|_{m_{r}}}(\bar{I}^{(n_{k},B)})>\epsilon.
\end{align}
Following the same logical chain in (MA-4), we see that (\ref{definingthecomplementtothesetofrelayobservations}) and (\ref{measureofthecomplement}) imply that there is a tinybook where the  number of codewords contributing a non-vanishing fraction of the mass to each ${\bf y}^{(n_{k},B)}_{r}\in\bar{I}^{(n_{k},B)}$ is at least:
\begin{align}
    2^{n_{k}(R+\epsilon_{0}-I(X_{s};Y_{r}|X_{r}))}.
\end{align}
This reduces the compression-cut-set mutual-information bound in (\ref{uncoordinatedcompressforward}) and (\ref{coordinatedcompressforward}) by at least $\epsilon_{0}$.  One wrinkle to consider is that (\ref{measureofthecomplement}) implies we could reduce the compression rate $R_{r}$ while keeping $R_{0}$ fixed to cover $\bar{I}^{(n_{k},B)}$.  This could help us if the relay$\rightarrow$destination channel is weak.  But we could also just simply increase $R_{0}$ by at most $\epsilon_{0}$ instead with no worse effect on the compression-cut-set bound.  So there's no advantage to considering not-good codebooks.

{\bf WU-3)} Now we address the feasibility constraints that compensate for the genie.  Since there is no genie, the destination needs to decode the relay codeword.  Without coordination between the source and relay, the destination can only rely on the relay transmission.  With coordination, the destination can use both the source and relay transmission.  Since $R_{r}=I(\hat{Y}_{r};Y_{r}|X_{r})$, this explains both feasibility constraints in (\ref{uncoordinatedcompressforward}) and (\ref{coordinatedcompressforward}).  

There is one wrinkle. The relay sends the compression to the destination in a different block from the one in which the compression is decided.  This means there is some tension in the choice of input $p(x_{s},x_{r})$ across blocks; we are simultaneously trying to minimize $I(X_{s};\hat{Y}_{r}|X_{r})$ in one block and maximize $I(X_{r};Y_{d})$ in the next.  The question is whether time-sharing different input distributions across blocks could lead to higher-rates.  However, since all the mutual-information terms are concave functions of the input distributions, there is no advantage to time-sharing.

{\bf WU-4)}  Finally, we relax the block-coding assumption.  Even without this assumption, there is still a source codebook ${\cal C}^{(n)}_{s}(R):=\{{\bf x}^{(n)}_{s}(m):m\in M^{(n)}_{s}\}$ induced by the source encoding function $f^{(n)}_{s}$.  For any codeword ${\bf x}^{(n)}_{s}\in{\cal C}^{(n)}_{s}(R)$, we have the measure $\mu_{Y_{r}|{\bf x}^{(n)}_{s}}(\hspace{1mm}\cdot\hspace{1mm})$ induced by $p(y_{r},y_{d}|x_{s},x_{r})$.  We use this measure to define a codebook measure:
\begin{align}
    \label{codebookmeasurenotblockencoding}
    \mu_{Y_{r}|{\cal C}^{(n)}_{s}(R)}({\bf y}^{(n)}_{r})=\sum_{m\in M^{(n)}_{s}}\mu_{Y_{r}|{\bf x}^{(n)}_{s}(m)}({\bf y}^{(n)}_{r})\times\frac{1}{|M^{(n)}_{s}|}.
\end{align}
For any sequence of sets $\{O^{(n)}_{Y_{r}}\subseteq{\cal Y}^{(n)}_{r}:n\in\mathbb{N}\}$, there is a corresponding sequence of rates:
\begin{align}
    \label{observationrate}
    R^{(n)}_{Y_{r}}:=\frac{1}{n}\log_{2}|O^{(n)}_{Y_{r}}|.
\end{align}
We are interested in sequences of sets $\{O^{(n)}_{Y_{r}}\subseteq{\cal Y}^{(n)}_{r}:n\in\mathbb{N}\}$ for which:
\begin{align}
    \label{probabilitygoestoone}
    \lim_{n\rightarrow\infty}\mu_{Y_{r}|{\cal C}^{(n)}_{s}(R)}(O^{(n)}_{Y})=1.
\end{align}
For any sequence of sets $\{O^{(n)}_{Y_{r}}\subseteq{\cal Y}^{(n)}_{r}:n\in\mathbb{N}\}$ satisfying (\ref{probabilitygoestoone}), there is a subsequence $\{n_{k}:k\in\mathbb{N}\}$ such that:
\begin{align}
    \label{subsequence}
    \lim_{k\rightarrow\infty}R^{(n_{k})}_{Y}=R_{Y_{r}}.
\end{align}
For any codeword ${\bf x}^{(n)}_{s}\in{\cal C}^{(n)}_{s}(R)$, we have the distribution $\mu_{X_{r}|{\bf x}^{(n)}_{s}}(\hspace{1mm}\cdot\hspace{1mm})$ induced by $p(y_{r},y_{d}|x_{s},x_{r})$ and the relay encoding function $f^{(n)}_{r}$.  As in (\ref{codebookmeasurenotblockencoding}), we can use this measure to define a codebook measure:
\begin{align}
    \label{codebookmeasureforrelaytransmission}
    \mu_{X_{r}|{\cal C}^{(n)}_{s}(R)}({\bf x}^{(n)}_{r})=\sum_{m\in M^{(n)}_{s}}\mu_{X_{r}|{\bf x}^{(n)}_{s}(m)}({\bf x}^{(n)}_{r})\times\frac{1}{|M^{(n)}_{s}|}.
\end{align}
We are interested in sequences of sets $\{O^{(n)}_{X_{r}}\subseteq{\cal X}^{(n)}_{r}:n\in\mathbb{N}\}$ for which:
\begin{align}
    \lim_{n\rightarrow\infty}\mu_{X_{r}|{\cal C}^{(n)}_{s}(R)}(O^{(n)}_{X_{r}})=1.
\end{align}
For any $\{O^{(n)}_{X_{r}}\subseteq{\cal X}^{(n)}_{r}:n\in\mathbb{N}\}$ there is a corresponding sequence of rates:
\begin{align}
    R^{(n)}_{X_{r}}:=\frac{1}{n}\log_{2}|O^{(n)}_{X_{r}}|.
\end{align}
There is also a subsequence $\{n_{k}:k\in\mathbb{N}\}$ of the subsequence in (\ref{subsequence}) where:
\begin{align}
    \lim_{k\rightarrow\infty}R^{(n_{k})}_{X_{r}}=R_{X_{r}}.
\end{align}
There are two cases to consider: the symbol-by-symbol regmine where $R_{Y_{r}}=R_{X_{r}}$ and the generalized block-coding regime where $R_{X_{r}}<R_{Y_{r}}$.  The block-coding regime is the one that's non-trivial, so we will discuss it.  If $R_{X_{r}}<R_{Y_{r}}$ then with high-probability according to the full $n$-length sequence, the relay is reliably converting sequences from high-rate observations to low rate transmissions.  This is only possible if the relay is looking at a significant number of observed symbols ${\cal Y}_{r}$ before encoding an equally significant number of transmitted symbols ${\cal X}_{r}$.  Significant, in this context, means some constant fraction of the total number of symbols (i.e., the full length of the codeword $n$).

This is a generalization of the original simple block-coding scheme.  In the original simple scheme, we had an input ``comb'' of observations feeding into an output comb of transmissions.  The input combs all have sequentially spaced teeth and do not overlap.  Each output comb immediately follows its input comb.  The source has the same rate accross all comb blocks.  In the generalization, the input combs have arbitrarily spaced teeth and are possibly interleaved with each other.  The combs have approximately the same number of teeth (but not necessarily exactly the same).  The source rates over different input combs might be different.  Nevertheless, this is fundamentally the same situation as the the simple block-coding scheme.  There are significant numbers of non-overlapping input combs and these can be treated as blocks.

If $R_{X_{r}}=R_{Y_{r}}$ we have a trivial case of the uncoordinated compress-forward scheme, where the destination decodes the compression after the full source sequence is transmitted.

\subsection{Inner-Bound: Achievability}
There are three achievability schemes: coordinated compress-forward, uncoordinated compress-forward, and decode-forward.  We focus on the first scheme.  The idea is to build a ``cloud'' of codewords around each compression and partially encode a message inside the compression.  This allows the source to predict the compression used by the relay.  The destination treats the compression as a normal compression and uses the normal compress-forward scheme to recover the message.  Only when decoding the compression itself (prior to decoding the message) does the destination makes use of the cloud.

\subsubsection{Codebook Generation}
We assume $B$ blocks of $n$ channel uses.  We are given a compression prior $p_{\text{cor}}(\hat{y}_{r}|x_{r})$ and a joint input distribution $p_{\text{cor}}(x_{s},x_{r})$ from ${\cal P}_{\text{cor-feasible}}$.  We get the distribution:  
\begin{align}
    \label{cloudcodebookdistribution}
    p(x_{s}|x_{r},\hat{y}_{r})&=\frac{p_{\text{cor}}(x_{s},x_{r})p_{\text{cor}}(\hat{y}_{r}|x_{r})}{\sum_{x_{s}}p_{\text{cor}}(x_{s},x_{r})p_{\text{cor}}(\hat{y}_{r}|x_{r})}
\end{align}
We define the joint distribution:
\begin{align}
    \label{jointdistributionforcoordinatedcompressforward}
    p(x_{s},x_{r},\hat{y}_{r},y_{r})=p_{\text{cor}}(y_{r}|\hat{y}_{r},x_{r})p_{\text{cor}}(x_{s},x_{r})p_{\text{cor}}(\hat{y}_{r}|x_{r}),
\end{align}
where all the terms on the right side of (\ref{jointdistributionforcoordinatedcompressforward}) are coming from ${\cal P}_{\text{cor-feasible}}$.  Now let $R_{r}=I(Y_{r};\hat{Y}_{r}|X_{r})$ with respect to the joint distribution in (\ref{jointdistributionforcoordinatedcompressforward}).  We build relay codebooks for each block.  Define:
\begin{align}
    {\cal C}^{(n,B)}_{r}(R_{r})&:=\{{\bf x}^{(n,B)}_{r}(m):m\in M^{(n,B)}_{r}\},
\end{align}
where $|M^{(n,B)}_{r}|\doteq R_{r}$ and each relay codeword is generated i.i.d according to $p(x_{r})$.  For each relay codeword ${\bf x}^{(n,B)}_{r}(m)\in{\cal C}^{(n,B)}_{r}(R_{r})$ in block $B$, we will generate a compression codebook: 
\begin{align}
   \hat{\cal C}^{(n,B)}_{r}(R_{r})|_{m_{r}}&:=\{\hat{\bf y}^{(n,B)}_{r}(\hat{m}_{r},m_{r}):\hat{m}_{r}\in\hat{M}^{(n,B)}_{r}({m_{r})}\}, 
\end{align}
where $|\hat{M}^{(n,B)}_{r}({m_{r})}|\doteq R_{r}$ and each compression codeword is generated i.i.d according to $p(\hat{y}_{r}|x_{r})$ and ${\bf x}^{(n,B)}_{r}(m)$.  For each relay codeword ${\bf x}^{(n,B)}_{r}(m)\in{\cal C}^{(n,B)}_{r}(R_{r})$ in block $B$, we build a source codebook: 
\begin{align}
    {\cal C}^{(n,B)}_{s}(R)|_{m_{r}}&:=\{{\bf x}^{(n,B)}_{s}(m_{s},m_{r}):m_{s}\in M^{(n,B)}_{s}(m_{r})\},
\end{align}
where $|M^{(n,B)}_{s}(m_{r})|\doteq R$.  The source codebook has a cloud structure so that: 
\begin{align}
    {\cal C}^{(n,B)}_{s}(R)|_{m_{r}}&:=\{{\cal C}^{(n,B)}_{s}(R_{s})|_{(\hat{m}_{r},m_{r})}:\hat{m}_{r}\in\hat{M}^{(n,B)}_{r}(m_{r})\},
\end{align}
where $|\hat{M}^{(n,B)}_{r}(m_{r})|\doteq R_{r}$ and $R_{s}=R-R_{r}$.  The source cloud codebooks are built around a relay-codeword-compression-codeword pair $({\bf x}^{(n,B)}_{r}(m_{r}),\hat{\bf y}^{(n,B)}_{r}(\hat{m}_{r},m_{r}))$ according to the distribution in (\ref{cloudcodebookdistribution}) to get:
\begin{align}
    {\cal C}^{(n,B)}_{s}(R_{s})|_{(\hat{m}_{r},m_{r})}&:=\{{\bf x}^{(n,B)}_{s}(m_{s},\hat{m}_{r},m_{r}):m_{s}\in M^{(n,B)}_{s}(\hat{m}_{r},m_{r})\},
\end{align}
where $|M^{(n,B)}_{s}(\hat{m}_{r},m_{r})|\doteq R_{s}$.  Every message $m\in M^{(n,B)}_{s}(m_{r})$ maps uniquely to an index pair $(m_{s},\hat{m}_{r})$ where $\hat{m}_{r}\in\hat{M}^{(n,B)}_{r}(m_{r})$ and $m_{s}\in M^{(n,B)}_{s}(\hat{m}_{r},m_{r})$. 

\subsubsection{Encoding}
First, we look at the source.  Suppose the source wants to send the message $m\in M^{(n,B)}_{s}(m_{r})$ in block $B$.  This message corresponds to the index pair $(m_{s},\hat{m}_{r})$ where $\hat{m}_{r}\in\hat{M}^{(n,B)}_{r}(m_{r})$ and $m_{s}\in M^{(n,B)}_{s}(\hat{m}_{r},m_{r})$.  The source sends the codeword ${\bf x}^{(n,B)}_{s}(m_{s},\hat{m}_{r},m_{r})$ in block $B$.  

Now we look at the relay.  The relay knows the compression codeword $\hat{\bf y}^{(n,B-1)}_{r}(\hat{m}_{r},m_{r})$ and the relay codeword ${\bf x}^{(n,B-1)}_{r}(m)$ sent in block $B-1$.  In block $B$, the relay sends ${\bf x}^{(n,B)}_{r}(\hat{m}_{r})$.

\subsubsection{Decoding}
First, we look at the relay.  At the end of block $B$, the relay finds the compression codeword $\hat{\bf y}_{r}$ that satisfies the joint-typicality check:
\begin{align}
    (\hat{\bf y}_{r},{\bf x}^{(n,B)}_{r},{\bf y}^{(n,B)}_{r})\in T^{(n)}_{*}(Y_{r},\hat{Y}_{r}|X_{r}).
\end{align}
Now we look at the destination.  Define the index recovery mapping 
\begin{align}
    \hat{I}^{-1}_{r}:\hat{\cal C}^{(n,B)}_{r}(R_{r})|_{m_{r}}\rightarrow\hat{M}^{(n,B)}_{r}(m_{r})
\end{align}
At the end of block $B$, the destination recovers the compression $\hat{\bf y}^{(n,B-1)}_{r}$ sent by the relay in block $B$ and the source in block $B-1$, via the following joint typicality checks:
\begin{align}
    ({\bf x}_{r}(I^{-1}(\hat{\bf y}_{r})),{\bf y}^{(n,B)}_{d})&\in T^{(n)}_{*}(X_{r},Y_{d}),\\
    (\hat{\bf y}_{r},{\bf x}^{(n,B-1)}_{r},{\bf y}^{(n,B-1)}_{d})&\in T^{(n)}_{*}(\hat{Y}_{r},Y_{d}|X_{r}).
\end{align}
Finally, the destination finds the source codeword that satisfies the following joint-typicality checks:
\begin{align}
    ({\bf x}_{s},{\bf x}^{(n,B-1)}_{r},\hat{\bf y}^{(n,B-1)}_{r})&\in T^{(n)}_{*}(X_{s},\hat{Y}_{r}|X_{r}),\\
    ({\bf x}_{s},{\bf x}^{(n,B-1)}_{r},\hat{\bf y}^{(n,B-1)}_{r},{\bf y}^{(n,B-1)}_{d})&\in T^{(n)}_{*}(X_{s},Y_{d}|X_{r},\hat{Y}_{r}).
\end{align}

\subsection{Analysis of Error Probability}
First we look at the source.  An error occurs if the source does not know the relay codeword sent in block $B$.  This happens if the relay finds a different compression $\hat{\bf y}^{(n,B-1)}_{r}$ from the one used to generate the source codeword.  The probability of this error event vanishes if $R_{r}<I(\hat{Y}_{r};Y_{r}|X_{r})$.  

Now we look at the destination.  An error occurs if the destination decodes the wrong compression codeword at the end of block $B$.  The probability of this error event vanishes if $R_{r}<I(X_{r};Y_{d})+I(\hat{Y}_{r};Y_{d}|X_{r})=I(\hat{Y}_{r},X_{r};Y_{d})$.  An error also occurs if the destination decodes the wrong source codeword at the end of block $B-1$.  The probability of this error event vanishes if $R<I(X_{s};\hat{Y}_{r}|X_{r})+I(X_{s};Y_{d}|X_{r},\hat{Y}_{r})=I(X_{s};\hat{Y}_{r},Y_{d}|X_{r})$.


\section{Conclusion}
\label{sec:conclusion}
The capacity of the relay channel was found using properties of typical sequences.  We derived convergence properties of sequences of conditionally typical sets anchored to arbitrary sequences of codewords.  We developed a sphere-packing argument for the converse to the capacity of a point-to-point channel.  We used the compactness of the simplex in the proof.  We introduced the capacity of the relay channel characterized in terms of three schemes: decode-forward, coordinated compress-forward, and uncoordinated compress-forward.  We presented a dichotomy in the compress-forward space; there is only one degree-of-freedom between the input distribution and the reverse-compression-channel distribution.  We show this dichotomy creates a zero-one circuit consistency law where coordination is only possible if the forward channel and reverse channel are aligned.  The proof of the capacity establishes a compression-cut-set bound for compress-forward schemes.  We use the compactness of the interval $[0,|{\cal X}_{r}|]$ and a sphere-packing argument to get the codeword rate contributing to the mass of a compression.  The same compactness argument reveals a symbol-by-symbol regime and a generalized-block-coding-regime.  We introduced coordinated compress-forward to show achievability.

\ifCLASSOPTIONcaptionsoff
  \newpage
\fi



\bibliographystyle{IEEEtran}
\bibliography{bibtex/bib/IEEEexample}

\begin{thebibliography}{1}
\providecommand{\url}[1]{#1}
\csname url@samestyle\endcsname
\providecommand{\newblock}{\relax}
\providecommand{\bibinfo}[2]{#2}
\providecommand{\BIBentrySTDinterwordspacing}{\spaceskip=0pt\relax}
\providecommand{\BIBentryALTinterwordstretchfactor}{4}
\providecommand{\BIBentryALTinterwordspacing}{\spaceskip=\fontdimen2\font plus
\BIBentryALTinterwordstretchfactor\fontdimen3\font minus \fontdimen4\font\relax}
\providecommand{\BIBforeignlanguage}[2]{{%
\expandafter\ifx\csname l@#1\endcsname\relax
\typeout{** WARNING: IEEEtran.bst: No hyphenation pattern has been}%
\typeout{** loaded for the language `#1'. Using the pattern for}%
\typeout{** the default language instead.}%
\else
\language=\csname l@#1\endcsname
\fi
#2}}
\providecommand{\BIBdecl}{\relax}
\BIBdecl

\bibitem{Van_Der_Meulen_1971}
E.~C. {Van Der Meulen}, ``{T}hree-terminal communication channels,'' \emph{Advances in Applied Probability}, vol.~3, no.~1, pp. 120--154, 1971.

\bibitem{Kramer2001}
G.~Kramer, ``Genie-aided outer bounds on the capacity of interference channels,'' in \emph{Proceedings. 2001 IEEE International Symposium on Information Theory (IEEE Cat. No.01CH37252)}, 2001, pp. 103--.

\bibitem{Wu2009}
\BIBentryALTinterwordspacing
X.~Wu and L.-L. Xie, ``{Asymptotic Equipartition Property of Output when Rate is above Capacity},'' 2009. [Online]. Available: \url{https://arxiv.org/abs/0908.4445}
\BIBentrySTDinterwordspacing

\bibitem{Cover1979}
T.~M. Cover and A.~El~Gamal, ``Capacity {T}heorems for the {R}elay {C}hannel,'' \emph{IEEE Transactions on Information Theory}, vol.~25, no.~5, pp. 572--584, 1979.

\end{thebibliography}
\end{document}